
\documentclass[10pt,journal]{IEEEtran}
%


%

%
\ifCLASSOPTIONcompsoc
  \usepackage[nocompress]{cite}
\else
  \usepackage{cite}
\fi
%

%
\ifCLASSINFOpdf
   \usepackage[pdftex]{graphicx}
\else
   \usepackage[dvips]{graphicx}
\fi
%
%

%
\usepackage{amsmath}
%
\allowdisplaybreaks 

\usepackage{enumitem}
%
\usepackage{mathtools}
\usepackage{amssymb}
\usepackage{amsthm}
%
\usepackage{amsfonts}
\usepackage{bm}
\usepackage{dsfont}
%
\usepackage{xcolor}
%
\usepackage{algpseudocode}
\usepackage{algorithm}
%
\usepackage{multirow}
%
%
\usepackage{caption}%
\usepackage{subcaption}
%
%
\usepackage{cuted}


\newtheorem{theorem}{Theorem}
\newtheorem{definition}{Definition}
\newtheorem{example}{Example}
\newtheorem{lemma}{Lemma}

\newtheorem{remark}{Remark}

\DeclareMathOperator*{\argmin}{\arg\!\min}
\DeclareMathOperator*{\argmax}{\arg\!\max}



\newcommand{\mc}{\mathcal}
\newcommand{\mb}{\mathbb}


\hyphenation{op-tical net-works semi-conduc-tor}


\begin{document}
%
\title{
Limited-Trust in Diffusion of Competing Alternatives over Social Networks
}
%
%
%
%

\author{Vincent~Leon,~
        S.~Rasoul~Etesami,~\IEEEmembership{Member,~IEEE,}
        and~Rakesh~Nagi,~\IEEEmembership{Senior~Member,~IEEE}
\IEEEcompsocitemizethanks{\IEEEcompsocthanksitem Vincent Leon, S. Rasoul Etesami, and Rakesh Nagi are with Department of Industrial and Enterprise Systems Engineering and Coordinated Science Laboratory, University of Illinois at Urbana-Champaign, Urbana, IL, 61801, USA. Email: (leon18, etesami1, nagi)@illinois.edu.
}
\thanks{
This material is based upon work supported by the Air Force Office of Scientific Research under award number FA9550-23-1-0107 and the NSF CAREER Award under grant number EPCN-1944403. A shorter version of this paper has been presented at the 2022 61st IEEE Conference on Decision and Control (CDC), Cancun, Mexico \cite{leon2022diffusion}.}
}

\IEEEtitleabstractindextext{%
\begin{abstract}
We consider the diffusion of two alternatives in social networks using a game-theoretic approach. Each individual plays a coordination game with its neighbors repeatedly and decides which to adopt. As products are used in conjunction with others and through repeated interactions, individuals are more interested in their long-term benefits and tend to show trust to others to maximize their long-term utility by choosing a suboptimal option with respect to instantaneous payoff. To capture such trust behavior, we deploy limited-trust equilibrium (LTE) in diffusion process. We analyze the convergence of emerging dynamics to equilibrium points using mean-field approximation and study the equilibrium state and the convergence rate of diffusion using absorption probability and expected absorption time of a reduced-size absorbing Markov chain. We also show that the diffusion model on LTE under the best-response strategy can be converted to the well-known linear threshold model. Simulation results show that when agents behave trustworthy, their long-term utility will increase significantly compared to the case when they are solely self-interested. Moreover, the Markov chain analysis provides a good estimate of convergence properties over random networks.
\end{abstract}

\begin{IEEEkeywords}
Diffusion dynamics, limited-trust equilibrium, Markov chains, social networks, game theory.
\end{IEEEkeywords}}

\maketitle

\IEEEdisplaynontitleabstractindextext

%
\IEEEpeerreviewmaketitle

\section{Introduction}\label{sec:introduction}

Imagine that a few competing products exist in the market, and the manufacturers would like to know whether their product will become popular and dominate the market. Suppose that each alternative has certain group of users, and the utility that each alternative brings about is available to all. Individuals interact with others who may have adopted different alternatives and choose one alternative strategically.
Through interactions, one of the alternatives may gradually spread in the network as more individuals adopt it because they can benefit more by switching to using the new product. Alternatively, it is possible that one product eventually dies out because it brings about less utility to consumers or has poor compatibility with other competing alternatives.

An example of strategic choice among competing alternatives is smartphone operation systems. Android and iOS have attracted consumers from other platforms and become dominant in the past decade. By contrast, Symbian has lost its predominance in market share, and the service for another competing platform Windows Phone has been discontinued. Other similar scenarios include strategic choice among competing brands of game consoles or network service providers. In the former, one may take into account which brand of game console its friends have when purchasing a game console. If it prefers sharing games and accounts with its friends to save the cost of game purchases, it can buy the same brand as the majority. It can buy a different brand if it prefers a wider variety of games across different platforms. In the example of the network service provider, one may consider joining the same network with its family or friends when subscribing to a mobile service, as signing up on the same network and setting up a family plan can reduce the cost per subscriber. As a result, individuals' decision-making may ultimately affect the market share of game consoles or network subscriptions.

Products like cell phones, cameras, and computers are likely to be used over a long term. As individuals frequently interact with others and update the choice of their product in conjunction with others, the long-term benefit obtained from consuming a product becomes more important than the short-term benefit that one can derive in isolation. That motivates individuals to give up some of their immediate gains in order to benefit in the long term. Therefore, from a managerial perspective, a good understanding of the consumers' long-term behavior can help companies better promote their products over the socioeconomic network. 

The above scenarios are some examples of diffusion of competing alternatives over socioeconomic networks, and one can consider many other applications such as epidemics spread in biological networks \cite{pare2020modeling}, information spread in computer networks \cite{khanafer2014information}, and belief propagation in social networks \cite{etesami2021open}. Diffusion of competing alternatives (e.g., product, virus, opinion, etc.) has been extensively studied in the past literature, where broadly speaking, the goal is to propagate certain types of products or behaviors in a desired way through the network \cite{foster1990stochastic,young1993evolution,kandori1993,kempe2003maximizing,kempe2005influential,young2006diffusion,young2011dynamics,montanari2010,kreindler2014,fazli2014nonprogressive,gursoy2018influence,chan2020influence,arieli2020speed,ghayoori2021seed}. In general, there are two main approaches to analyzing the diffusion: i) epidemic-based modeling, in which individuals adopt an alternative based on the intensity of being exposed to it \cite{kempe2003maximizing,acemoglu2011diffusion}, and ii) game-theoretic modeling, in which individuals make a decision about adopting an alternative in order to maximize their utility functions \cite{montanari2010,etesami2016complexity}. In this work, we follow the second approach to analyze the behavior of diffusion dynamics under a different equilibrium concept.

In this paper, we focus on a game-theoretic model for diffusion dynamics. More precisely, we consider diffusion of a superior alternative based on the notion of \emph{limited-trust equilibrium} (LTE) \cite{murray2021} over social networks. 
LTE aims to explain individual's rationality regarding short-term and long-term utilities, which accounts for an individual's behavior of deviating from the apparent best strategy for one-shot game and investing in future by showing a certain degree of generosity towards others (see Section \ref{subsec:prelim-lte}). In each iteration of diffusion, individuals play a coordination game with their neighbors and show a certain degree of trust when making a decision. It is worth noting that one can view LTE in the form of Nash equilibrium (NE) using a non-smooth transformation of payoff function. However, analyzing diffusion dynamics under such a transformation of payoff function is difficult. Instead, we analyze the LTE diffusion process directly based on the payoff matrix associated with competing alternatives, which has been adopted in the literature of diffusion dynamics (see Section \ref{subsec:literature}). The main contributions of this paper are as follows:
\begin{itemize}
    \item Developing a diffusion model based on the notion of LTE;
    \item Analyzing the convergence rate and long-term behavior of LTE diffusion dynamics using Markov chains and mean-field approximations; 
    \item Establishing a connection between the LTE diffusion model and the standard linear-threshold model.
\end{itemize}
Simulation results show that the LTE diffusion dynamics result in desirable convergence properties compared to the conventional NE diffusion dynamics, which is based on the assumption that individuals choose the apparent best response of a single-step diffusion (see Section \ref{subsec:diffusion-ne}). When the superior alternative is the risk-averse option, both LTE and NE diffusion dynamics converge to the preferred option, and LTE diffusion dynamics converges faster than NE dynamics. When the superior alternative and the risk-averse alternative do not coincide, the NE diffusion dynamics converges to the risk-averse and undesirable option, whereas the LTE diffusion dynamics converges to the superior alternative or converges to the risk-averse option at a slower rate than the NE dynamics. The average utility by following LTE diffusion dynamics increases drastically compared to the case of NE diffusion dynamics, which reveals that prevalence of the superior alternative is associated with the individuals' desire for better long-term utility provided that they behave trustworthy. We also justify the theoretical bounds obtained from our Markov chain analysis using extensive simulation results. The analytical results from the Markov chain in terms of \emph{probability of domination of the superior alternative} and \emph{expected time to equilibrium} match the simulation results reasonably well, especially when the underlying structure of the Markov chain is 1-dimensional. 


\subsection{Related Work} \label{subsec:literature}

There has been a long literature on the evolution of diffusion dynamics over social networks \cite{acemoglu2011diffusion,kempe2003maximizing,etesami2016complexity}. It captures the essence of ``viral marketing'' which takes advantage of the power of ``word-of-mouth'' among the interactions of individuals in order to promote a product, a new technology, a belief, etc. 
The epidemic-based models are based on the exposure of individuals to the influence. The two most notable epidemic-based diffusion models are independent cascade (IC) model and linear threshold (LT) model \cite{kempe2003maximizing}. In the IC model, each newly activated individual can activate each of its inactive neighbors with some probability; in the LT model, each agent has a threshold, and an inactive agent will become active if the sum of weights from its activated neighbors exceeds the threshold.

The LT and IC models are closely related to the \emph{influence maximization} problem in which the goal is to choose a fixed-size set of initial adopters in a social network to maximize the spread of influence. For the LT and IC models, the seed selection problem can be formulated as a combinatorial optimization problem \cite{kempe2003maximizing,kempe2005influential}.  For these models, it was shown in \cite{kempe2003maximizing} that the objective function of the influence maximization problem is a submodular function of the initial seed set, and a hill-climbing greedy algorithm can be used to get a $(1-1/e)$-approximation of the optimal seed set. Subsequent studies directly related to the original LT and IC models are \cite{kempe2005influential, mossel2010submodularity}, which extended the IC and LT models to a more general setting and generalized the results about submodularity. Some variants of the LT model include the non-progressive LT model in which an individual can switch between active and inactive states \cite{fazli2014nonprogressive,chan2020influence} and the deterministic LT model where threshold values are input to the model \cite{gursoy2018influence}. An extension of the IC model is the multi-cascaded diffusion model, where multiple types of cascade spread over the social network \cite{ghayoori2021seed}.

Apart from the diffusion model itself, seed selection is a dedicated topic and the central problem for various diffusion models. Early work on seed selection is centrality-based heuristics, including degree and distance centrality heuristics \cite{wasserman1994,borgatti2006,hinz2011} that associate the influence of a node to its degree or its distance from other nodes.  \cite{samadi2016subjective} designed Bayesian evidence-based parallel cascade model to analyze the diffusion dynamics of two competing products and to study the optimal seed selection strategy based on such dynamics. The parallel cascade model assumes that an individual can be inactive, positive active, or negative active. \cite{samadi2017temporal} investigates the temporal aspects of the diffusion process based on the parallel cascade model, which helps decision-makers set up time horizons for short-term and long-term goals. 
A generalization of the seed selection problem is the seed activation scheduling \cite{chierichetti2014,samadi2018seed}, in which the decision-makers are allowed to activate the seeds in a sequence as diffusion processes rather than activating the seeds all at once.

On the other hand, most game-theoretic diffusion models are based on utility maximization under the concept of Nash equilibrium (NE). 
\cite{kandori1993,young2006diffusion} study dynamics of the diffusion process in which an individual interacts with its neighbors by playing a coordination game and makes a rational decision by choosing the product that maximizes its instantaneous utility.
\cite{kandori1993} investigated the equilibrium emerging from the long-term behavior of the noisy best-response dynamics and concluded that the long-term equilibrium coincides with the risk-dominant equilibrium introduced by \cite{harsanyi1988}. The work \cite{young2006diffusion} further characterized the long-run behavior and convergence properties of the noisy best response dynamics in terms of the geometry of underlying network. More recent results in that line of research include \cite{montanari2010,kreindler2014,arieli2020speed}, which provided bounds on the expected convergence time of the noisy best response dynamics that are either dependent or independent of the network structure.

The above literature on game-theoretic models relies on individuals being myopic: individuals choose the option that maximizes their instantaneous utility. On the other hand, there is abundant literature showing that players may deviate from the apparent best strategy or NE strategy of single-shot game in dynamic setting, which leads to the appearance of altruism and cooperation. Earlier work \cite{kreps1982reputation} shows this phenomenon mathematically when players have incomplete information, and \cite{kreps1982rational} shows that such a setting can lead to cooperative behavior of the two players in finitely repeated prisoner's dilemma. \cite{basar1998,basar2018handbook,chung2016game,leon2021bandit} study dynamic games in which players aim to maximize a certain notion of cumulative utility over a time horizon. In such settings, a player generally has to consider both instantaneous and future utilities when making a decision, hence resulting in a decision strategy that may deviate from the apparent best strategy for its instantaneous utility. 

There is abundant literature in the area of economics that studies the phenomenon where players deviate from the apparent NE strategy in strategic games.
Various behavioral models and experiments have been proposed and designed. Some well-known behavioral models include level-$k$ \cite{nagel1995unraveling,stahl1994experimental,stahl1995players,costa-gomes2001cognition,costa-gomes2006cognition}, cognitive hierarchy \cite{camerer2004cognitive}, and generalized cognitive hierarchy models \cite{chong2016generalized}, which assume that players have different levels of thinking ability ranging from random to rational. The strategy for level-0 players is specified, and higher-level players best respond to lower-level players. Other behavioral models include quantal response equilibrium \cite{mckelvey1995quantal} and cursed equilibrium \cite{eyster2005cursed} models. 
For experiments, the authors of \cite{palfrey1996altruism} designed an experiment to measure the effect of altruism, reputation, and noise in voluntary contribution games and reached a conclusion that the apparent altruistic behavior was principally due to random variation, whereas altruism and reputation-building played a minor role in that.
In \cite{carrillo2009compromise}, an experiment of a compromise game was designed, and the experiment results deviated from the NE when players had private information. Several above-mentioned cognitive-behavioral models are applied to the experiment results.
Finally, in \cite{ali2021adverse}, the authors designed an experiment to study the difference in behavior towards adverse and advantageous selections when players were informed asymmetrically. The experiment results indicate that players account more for adverse than advantageous selection, and such a difference can be connected to learning from feedback and self-confirming equilibria.


A dedicated line of research studies trust and reputation in dynamic games. In \cite{palfrey1996altruism,berg1995trust,englewarnick2006learning}, the authors conducted experiments on repeated games and provided theoretical analyses of trust to account for the behavior that deviates from the NE of one-off games. Various notions of trust have been proposed to account for the behavior of choosing a suboptimal strategy regarding one-stage utility. The introduction of trust as a parameter implicitly induces another game on top of the original one, and players play rationally in the induced game.
In \cite{ledyard1995public}, the authors propose \emph{$\alpha$-altruism}, a notion of trust suitable for simultaneous one-shot games. On this notion, each individual aims to maximize its perceived utility which accounts for social welfare as a part of its utility. In an earlier work \cite{palfrey1996altruism}, an $\alpha$-altruism-like notion was also adopted in the hypothesis and experiment analysis.
\cite{murray2021} has recently introduced the concept of limited-trust equilibrium (LTE) to explain individuals' trust behavior that may lead to higher social welfare \cite{murray2022network}.

%


\subsection{Organization and Notation}
The paper is organized as follows. In Section \ref{sec:preliminaries}, we provide some preliminary results on LTE and conventional NE diffusion dynamics. In Section \ref{sec:model}, we formally introduce the LTE diffusion model. In Section \ref{sec:mean-field-approx}, we transform the LTE diffusion model to a reduced-size Markov chain and characterize its convergence time and behavior. In Section \ref{sec:threshold-model}, we make a connection between the LTE diffusion model and the notable linear-threshold diffusion model and show their equivalence under certain regimes. Simulation results are provided in Section \ref{sec:simulation}. We conclude the paper in Section \ref{sec:conclusion}.
Additionally, we provide an e-companion \cite{leon2022limitedtrust-arxiv} for numerical examples, proof of theorem and additional simulation results. 

\noindent
{\bf Notation:} We use bold fonts of lower-case letters to denote vectors (e.g., $\bm{x}$) and subscript indices to denote their components (e.g., $x_i$). We use bold fonts of upper-case letters to denote matrices (e.g., $\bm{P}$) and subscript indices to denote their entries (e.g., $P_{ij}$). For any positive integer $k$, we let $[k] \triangleq \{1, \cdots, k\}$.

\section{Preliminaries}\label{sec:preliminaries}

\subsection{Limited-Trust Equilibrium}\label{subsec:prelim-lte}

Under the conventional NE, each individual behaves selfishly. However, in many cases, individuals make choices that do not appear to benefit them in the short term, showing some degree of altruism to others with the hope of a similar return from them. For example, people show concern for others and for their well-being. If each individual considers only its own immediate benefit, no one will care for others, and no one will get support because one cannot get benefited instantly by spending time listening to and encouraging others. 
The authors of \cite{murray2021} have introduced a new concept of equilibrium named \emph{limited-trust equilibrium} (LTE) to account for such phenomena. On the notion of LTE, a player aims to maximize its long-term utility by trusting other players and giving up a certain amount of utility so that others gain considerably more than what it loses, hoping that other players will return that favor in future. LTE provides a new explanation to such phenomena where individuals invest a certain amount of short-term utility for future returns.



Formally, in a social network $G([n], E)$, each player $i \in [n]$ has a hard \emph{trust limit} (or \emph{trust level}) $\delta_i> 0$. On the concept of LTE, player $i$ chooses the strategy that maximizes the social welfare among all the strategies whose utilities are at most $\delta_i$ less than the maximum utility that player $i$ can obtain. Consider a finite $n$-player game. Let $\Sigma_i$ denote the strategy set of player $i$ and $\sigma_i \in \Sigma_i$ be a strategy. Denote by $\sigma=(\sigma_1, \cdots, \sigma_n)$ the strategy of all players. Given a player $i$, denote by $\sigma_{-i}=(\sigma_1, \cdots, \sigma_{i-1}, \sigma_{i+1}, \cdots, \sigma_n)$ the strategy of the players other than $i$, and let $u_i(\sigma_i, \sigma_{-i})$ and $u(\sigma_i, \sigma_{-i})=\sum_{j=1}^n u_j(\sigma_j, \sigma_{-j})$ be the utility of player $i$ and the social welfare respectively. The set \(\sigma_i^G(\sigma_{-i}) \triangleq \argmax_{\sigma_i\in \Sigma_i} u_i(\sigma_i, \sigma_{-i})\) is defined as the \emph{greedy best response} of player $i$ given $\sigma_{-i}$. Let $\sigma_i^G \in \sigma_i^G(\sigma_{-i})$, and denote the trust limits of all the players by $\bm{\delta}=(\delta_1, \cdots, \delta_n)$, where $\delta_i > 0, \forall i\in [n]$. 

\begin{definition}
A strategy profile $\sigma=(\sigma_1, \cdots, \sigma_n)$ is a limited-trust equilibrium (LTE) if and only if \(u_i(\sigma_i^G,\sigma_{-i})-u_i(\sigma_i,\sigma_{-i})\leq \delta_i\) and \(u(\sigma_i,\sigma_{-i})\geq u(\sigma_i',\sigma_{-i})\) for all $i\in [n]$ and any \(\sigma_i'\in \{\sigma_i' \in \Sigma_i: u_i(\sigma_i^G,\sigma_{-i})-u_i(\sigma_i',\sigma_{-i})\leq \delta_i\}\).
\end{definition}

Therefore, given fixed strategy of other players $\sigma_{-i}$, the \emph{limited-trust best response} of player $i$, denoted by $\sigma_i^*(\sigma_{-i})$, can be obtained by solving the following program: 
\begin{align}
    \sigma_i^*(\sigma_{-i}) = & \argmax_{\sigma_i \in \Sigma_i} u(\sigma_i, \sigma_{-i}), \tag{P1} \label{eq:lt-best-resp-LP} \\
    \text{s.t.} \quad & u_i(\sigma_i^G,\sigma_{-i}) - u_i(\sigma_i,\sigma_{-i}) \leq \delta_i. \notag
\end{align}

Like NE, one can show that LTE exists in every $n$-player finite game with positive trust limits. In particular, computation of LTE in general is PPAD-hard \cite{murray2021}.

\subsection{Game-theoretic Model for Diffusion under NE} \label{subsec:diffusion-ne}

A social network of $n$ players is represented by a weighted undirected graph $G([n], E)$ with edge weights $\{w_{ij}\}_{(i,j)\in E}$. Each vertex represents one player in the social network, and each edge represents the interaction between two players. For an edge $(i,j)\in E$, the interaction between $i$ and $j$ is symmetric. The weights of edges describe the probability, strength or importance of the interaction. Each player $i$ has two choices: $A$ and $B$, i.e., $\Sigma_i=\{A,B\}$. The state of player $i$ is denoted by $x_i\in \{A,B\}$, and the state of the process is denoted by $\bm{x}=(x_1, \cdots, x_n)\in \{A, B\}^n$. 

The utility for player $i$ of choosing $A$ and $B$ is composed of two components, an individual component, denoted by $v_i(x_i)$, representing the player's private preference for $A$ or $B$ irrespective of other players' choices, and a social component resulted from the externalities due to other players' choices. The social component for $i$ is expressed as $\sum_{j\in N(i)} w_{ij}v(x_i,x_j)$, which is the sum of utilities from the interaction with each of $i$'s neighbors $N(i)$. Here, $v(x_i, x_j)$ is the payoff function of a two-person \emph{coordination game} in which each player has two pure strategies $A$ and $B$. The payoff matrix of the coordination game is shown in Table \ref{tab:coord-game-payoff-matrix}.
It is assumed that conformity is always better than non-conformity. That is, in the interaction with each of its neighbors, the player gets higher payoff when it chooses the same product as the neighbor than when it chooses the different product. This implies that $a > d$ and $b > c$. Therefore, the (total) utility to agent $i$ in state $\bm{x}$ equals \(u_i(\bm{x}) = \sum_{j\in N(i)} w_{ij}v(x_i,x_j) + v_i(x_i)\). 

\begin{example}
Assume that the players are users of iPhone and Android phone in a social network. 
Each player has its own preference for iPhone or Android phone, which is reflected in $v_i(x_i)$ for all $i\in [n]$. The externality due to the interaction between two users is embodied in $v(x_i,x_j)$. Most likely, each type of phone works better with the same type (e.g., file transfer), and hence the payoff through interaction is greater when two users are in conformity.
\end{example}


\begin{table}[h]
    \captionsetup{font=small}
    \caption{Coordination game payoff matrix with $a>d$ and $b>c$.}
    \centering
    \begin{tabular}{|c||c|c|}
        \hline
          & $A$ & $B$  \\ \hline \hline
        $A$ & $a$, $a$ & $c$, $d$ \\ \hline
        $B$ & $d$, $c$ & $b$, $b$ \\ \hline
    \end{tabular}
    \label{tab:coord-game-payoff-matrix}
\end{table}

In fact, one can show that the above $n$-person game is a \emph{potential game} with the potential function \cite{young2006diffusion}:
\begin{equation}\label{eq:NE-diffusion-potential}
    \Phi(\bm{x}) = (a-d) w_{AA}(\bm{x}) + (b-c) w_{BB}(\bm{x}) + \sum_{i=1}^n v_i(x_i),
\end{equation}
where $w_{AA}(\bm{x})$ is the sum of the weights on all edges $(i,j)$ such that $x_i=x_j=A$, and $w_{BB}(\bm{x})$ is defined in a similar way. Therefore, one can obtain a pure-strategy Nash equilibrium of this game by maximizing the potential function \eqref{eq:NE-diffusion-potential}. As maximizing \eqref{eq:NE-diffusion-potential} in general could be a difficult task, a popular method for obtaining its NE points is to analyze the long-term behavior of certain natural diffusion dynamics that may arise due to repeated interactions of the players over the social network.

The diffusion dynamics is a continuous-time process in which each player is assumed to have an independent Poisson clock that rings once per unit time in expectation. A player updates its strategy every time its Poisson clock rings. This can be naturally extended to a discrete-time process in which, at each time instant, one player is randomly selected to update its strategy. When a player updates its strategy, instead of deterministically playing its best response strategy, the player is assumed to choose a strategy with probability described by the following logit function:

\noindent
\underline{NE Diffusion Model}
\begin{equation} \label{eq:nash-diffusion-logit-prob}
    P(x_i=A | \bm{x}_{-i}) = \frac{e^{\beta u_i(A, \bm{x}_{-i})}}{e^{\beta u_i(A, \bm{x}_{-i})}+e^{\beta u_i(B, \bm{x}_{-i})}},
\end{equation}
where $\beta$ is a parameter measuring the sensitivity of the agent to payoff differences. As $\beta \to \infty$, the stochastic logit dynamics by Eq. \eqref{eq:nash-diffusion-logit-prob} degenerates to deterministic best response dynamics. Hence, on the top of rationality, Eq. \eqref{eq:nash-diffusion-logit-prob} introduces randomness and stochasticity into the diffusion process. 

As is shown in \cite{young2006diffusion,foster1990stochastic}, following the stochastic diffusion process by Eq. \eqref{eq:nash-diffusion-logit-prob} induces a long-run distribution over the game state $\boldsymbol{x}$, which is concentrated on the states that maximize the potential function by Eq. \eqref{eq:NE-diffusion-potential}, i.e., the pure NE points. In particular, if $v_i(A)= v_i(B)$ for all $i\in [n]$, i.e., everyone is indifferent between $A$ and $B$, the potential function \eqref{eq:NE-diffusion-potential} reduces to
\begin{equation}\label{eq:NE-diffusion-potential-social-only}
    \Phi(\bm{x}) = (a-d) w_{AA}(\bm{x}) + (b-c) w_{BB}(\bm{x}),
\end{equation}
and the diffusion process by Eq. \eqref{eq:nash-diffusion-logit-prob} will converge to the risk-dominant equilibrium \cite{harsanyi1988,kandori1993,young2006diffusion}. If $a-d > b-c$, all-$A$ state is the risk-dominant equilibrium, and $A$ is the risk-dominant strategy. If $a-d < b-c$, all-$B$ state and $B$ are the risk-dominant equilibrium and risk-dominant strategy, respectively. The risk-dominant equilibrium and strategy reflect the risk-averse behavior in face of uncertainty.

\subsection{Linear Threshold Model}

One of the notable models to study the influence spread and epidemic is the linear threshold (LT) model introduced in \cite{kempe2003maximizing}.
The social network is represented by a directed graph $G([n], E)$. Each player, represented by a vertex, has two states: active and inactive. There are two types of LT model: \emph{progressive} and \emph{non-progressive} models. In the progressive model, once a player becomes active, it will remain active forever and cannot be inactive again; in the non-progressive model, each individual can switch back and forth between active and inactive states. The progressive model is adopted for the conventional LT model. Each player $i\in [n]$ is assigned a threshold $\theta_i$ drawn uniformly at random from $[0,1]$. Each player $i$ is influenced by its out-neighbor $j \in N^+(i)$ according to a weight $b_{ij}$ such that $\sum_{j \in N^+(i)} b_{ij} \leq 1$. At the beginning, an initial set of active players is chosen. In the diffusion process, at step $t=1,2,\ldots$, each player $i \in [n]$ examines the states of its neighbors and becomes active if $\sum_{\substack{j \in N^+(i) \\ j \text{ active}}} b_{ij} \geq \theta_i$. Given $\theta_i$ for all $i\in [n]$ and initial set of active players, the diffusion progresses deterministically over the network until no more activation takes place.

\section{Diffusion Dynamics under Limited-Trust Equilibrium}\label{sec:model}


We consider the diffusion of a superior alternative on an undirected social network $G([n],E)$ with two competing alternatives $A$ and $B$. As before, each vertex represents one player, and each edge represents an interconnection between two players. For $i\in [n]$, we let $N(i)$ be the set of neighbors of player $i$ and $d(i)$ be the degree of vertex $i$. We assume that $G$ is a simple graph such that it does not contain self-loops or parallel edges. This is a realistic assumption because individuals obtain information about a product from the people they know and obtain utility through their interaction with them. 

The diffusion process takes place over a fixed time horizon $T$ and is modeled as a discrete-time process. During the process, each player can choose between two strategies: $A$ and $B$. At time instant $t\in [T]$, one player is randomly picked to update its strategy, in which case that player plays a coordination game with each of its neighbors. The payoff matrix of the coordination game is the same as the one for the diffusion model under NE as shown in Table \ref{tab:coord-game-payoff-matrix}, where without loss of generality, we may assume $a, b, c, d \geq 0$. As in the diffusion model under NE, we assume conformity is always better than non-conformity, i.e., $a > d$ and $b > c$. Furthermore, we assume that $A$ is the superior alternative that brings about more utility, i.e., $a > b$. We can express the state of all players at time $t$ using an $n$-dimensional vector \(\bm{x}(t)\in\{A, B\}^n\). Note that this discrete-time process can be naturally extended to a continuous-time process in which each player has an independent Poisson clock with unit rate, and the player updates its strategy every time its Poisson clock rings. 

During the diffusion process, each player makes decision regarding the competing alternatives by considering its short-term and long-term utilities. To realize this, each player takes into account both its utility and its neighbors' utilities when it makes a decision and evaluates the gain and loss of both factors. When the cost of selecting some product which may not be the best one for the player in terms of utility but maximizes the social welfare can be afforded, the player will choose that product to benefit its neighbors and hope that its neighbors will return the favor to it in the same way in future. 

More precisely, let $x_i \in \{A, B\}$ denote the state of player $i$, and denote by $v(x_i, x_j)$ the payoff of $i$ in the coordination game between $i$ and its neighbor $j$. The total utility of player $i$ by playing $x_i$ equals $u_i(x_i, \bm{x}_{-i}) = \sum_{j \in N(i)} v(x_i, x_j)$, and the social welfare under state $\bm{x}$, denoted by $u(\bm{x})$, is given by
\[u(\bm{x}) = \sum_{i\in V} u_i(\bm{x}) = \sum_{(i,j)\in E} [v(x_i, x_j) + v(x_j, x_i)].\] 
Given trust levels \(\bm{\delta} = (\delta_1, \cdots, \delta_n) > 0\), let
\begin{align*}
    S_i(\bm{x}_{-i}) = & \Big\{x_i \in \{A, B\}: \\
    & \quad \max_{x_i' \in \{A, B\}} u_i(x_i', \bm{x}_{-i}) - u_i(x_i, \bm{x}_{-i}) \leq \delta_i\Big\}
\end{align*}
denote the set of strategies whose utilities are at least the utility of the best response minus player $i$'s trust level. This is equivalent to finding the strategy set satisfying the constraint in \ref{eq:lt-best-resp-LP}.
Since the strategy set of player $i$ contains only two elements, the strategies in $\argmax_{x_i \in \{A, B\}} u_i(x_i, \bm{x}_{-i})$ are in $S_i(\bm{x}_{-i})$, and whether the strategies in $\argmin_{x_i \in \{A, B\}} u_i(x_i, \bm{x}_{-i})$ are in $S_i(\bm{x}_{-i})$ depends on the difference between $\max_{x_i \in \{A, B\}} u_i(x_i, \bm{x}_{-i})$ and $\min_{x_i \in \{A, B\}} u_i(x_i, \bm{x}_{-i})$. If \(\min_{x_i \in \{A, B\}} u_i(x_i, \bm{x}_{-i}) \geq \max_{x_i \in \{A, B\}} u_i(x_i, \bm{x}_{-i}) - \delta_i\), then both $A$ and $B$ are in $S_i(\bm{x}_{-i})$ and satisfy the limited-trust constraint in \ref{eq:lt-best-resp-LP}. As a result, both strategies are under player $i$'s consideration, and the player will choose whichever maximizes the social welfare. On the contrary, if \(\min_{x_i \in \{A, B\}} u_i(x_i, \bm{x}_{-i}) < \max_{x_i \in \{A, B\}} u_i(x_i, \bm{x}_{-i}) - \delta_i\), $\argmax_{x_i \in \{A, B\}} u_i(x_i, \bm{x}_{-i})$ contains only one strategy, and this strategy is the only choice for the player because the player will lose too much utility if it chooses the strategy not in $\argmax_{x_i \in \{A, B\}} u_i(x_i, \bm{x}_{-i})$ regardless of the social welfare. 
By introducing randomness into the model in a similar manner and adopting the logit dynamics described by \eqref{eq:nash-diffusion-logit-prob} in the NE diffusion model, we can formally introduce the LTE diffusion model. Let us define 
\begin{align*}\nonumber
    W_i(\bm{x}_{-i}) &= \argmax_{x_i\in \{A, B\}} u(x_i,\bm{x}_{-i}), \\
    U_i(\bm{x}_{-i}) &= \argmax_{x_i\in \{A, B\}} u_i(x_i,\bm{x}_{-i}), \\
    u^*_i(\bm{x}_{-i}) &= \max_{x_i\in\{A, B\}} u_i(x_i,\bm{x}_{-i}), \\
    u'_i(\bm{x}_{-i}) &= \min_{x_i\in\{A, B\}} u_i(x_i,\bm{x}_{-i}).
\end{align*}
Then, the LTE diffusion dynamics can be described by

\medskip
\noindent
\underline{LTE Diffusion Model}
\begin{enumerate}[leftmargin=*]
    \item If \(W_i(\bm{x}_{-i}) \cap U_i(\bm{x}_{-i}) \neq \varnothing\), 
    \begin{equation} \label{eq:lte-diffusion-1-prob}
        P(x_i \in W_i(\bm{x}_{-i}) \cap U_i(\bm{x}_{-i})| \bm{x}_{-i}) = 1;
    \end{equation}
    
    \item If \(W_i(\bm{x}_{-i}) \cap U_i(\bm{x}_{-i}) = \varnothing\),
    \begin{enumerate}
        \item \(u'_i(\bm{x}_{-i}) \geq u^*_i(\bm{x}_{-i}) - \delta_i\),
        \begin{equation} \label{eq:lte-diffusion-2a-prob}
            P(x_i = A | \bm{x}_{-i}) = \frac{e^{\beta' u(A, \bm{x}_{-i})}}{e^{\beta' u(A, \bm{x}_{-i})} + e^{\beta' u(B, \bm{x}_{-i})}};
        \end{equation}
        
        \item \(u'_i(\bm{x}_{-i}) < u^*_i(\bm{x}_{-i}) - \delta_i\),
        \begin{equation} \label{eq:lte-diffusion-2b-prob}
            P(x_i = A|\bm{x}_{-i}) = \frac{e^{\beta u_i(A, \bm{x}_{-i})}}{e^{\beta u_i(A, \bm{x}_{-i})} + e^{\beta u_i(B, \bm{x}_{-i})}}.
        \end{equation}
    \end{enumerate}
\end{enumerate}
\noindent
where $\beta'$ and $\beta$ are the parameters measuring the sensitivity of players to the difference between social welfare and utility, respectively. A player who maximizes the social welfare benefits its neighbors and hence aims to gain more utility in the long run as its neighbors return the favor to it when it is their turn. In this regard, social welfare embodies a player's long-term utility. Therefore, $\beta'$ reflects the sensitivity of a player to its long-term utility, whereas $\beta$ reflects the sensitivity to the short-term utility as in the NE model. When the social welfare and utility maximizers coincide (Case 1), the player chooses the maximizing product deterministically because it is beneficial to both short-term and long-term utilities. When the social welfare and utility maximizers differ (Case 2), the player will compare the gain and the loss. As in the NE diffusion model, instead of choosing the best-response strategy deterministically, the players are allowed to ``make a mistake'' and choose the ``worse'' alternative with a small probability. The probability function for decision-making is softened by a logistic function. A player may attach different importance to long-term and short-term utilities, reflecting different sensitivity to social welfare and utility.  
Simulation results are provided in Section \ref{sec:simulation} to compare the diffusion process under NE and LTE dynamics. Simulations show that, on average, players gain more utility when they behave trustworthy, which results in a rapid spread of the superior alternative over the network. 

\subsection{Relationship with Potential Games for Diffusion} \label{subsec:potential}

As we have seen in Section \ref{subsec:diffusion-ne}, the diffusion under NE dynamics is a potential game with the potential function as follows \cite{young2006diffusion}:
\begin{equation}\label{eq:NE-potential-unwei}
    \Phi(\bm{x}) = (a-d) E_{AA}(\bm{x}) + (b-c) E_{BB}(\bm{x}),
\end{equation}
where $E_{AA}(\bm{x})$ is the number of all edges $(i, j)$ such that $x_i = x_j = A$, and likewise $E_{BB}(\bm{x})$ is defined. (Note that Eq. \eqref{eq:NE-potential-unwei} is the unweighted version of Eq. \eqref{eq:NE-diffusion-potential-social-only}.) Under NE dynamics, the incentive function, the function which each player aims to maximize, is the player's own utility. It can be shown that under NE dynamics, the change of the player's incentive function matches the change in the value of Eq. \eqref{eq:NE-potential-unwei}. Under LTE dynamics, it is more complicated to analyze the diffusion process as the potential function may not exist. 
When each player has zero trust limit, it is the same as the NE dynamics, and the diffusion game is a potential game with the potential function by Eq. \eqref{eq:NE-potential-unwei}. 
When each player has infinite trust limit, the incentive function of each player is social welfare, and the diffusion process is another potential game with the potential function as follows: 
\begin{equation} \label{eq:SW-potential-unwei}
    \Psi(\bm{x}) = (2a-c-d) E_{AA}(\bm{x}) + (2b-c-d) E_{BB}(\bm{x}).
\end{equation}
It can also be shown that any change in social welfare matches the change in the function value by Eq. \eqref{eq:SW-potential-unwei}. 
When the player's trust limit is between zero and infinity, the potential function may not exist. If both $A$ and $B$ satisfy the constraint in \ref{eq:lt-best-resp-LP} (see Section II-A) for some player, the player's incentive function is social welfare; otherwise, if only one of $A$ and $B$ (the greedy best response) satisfies the constraint, the player's incentive function is utility. In each state, different players may have different incentive functions due to different positions in the network, different choices of alternatives, and heterogeneous values of trust limits. As a result of the non-uniqueness of incentive function, the potential function may not exist. Therefore, the diffusion process under LTE dynamics is not a potential game in general. 

\begin{figure}[ht]
    \captionsetup{font=small}
    \centering
    \includegraphics[trim=.4in .2in .4in .4in, clip, width=0.4\textwidth]{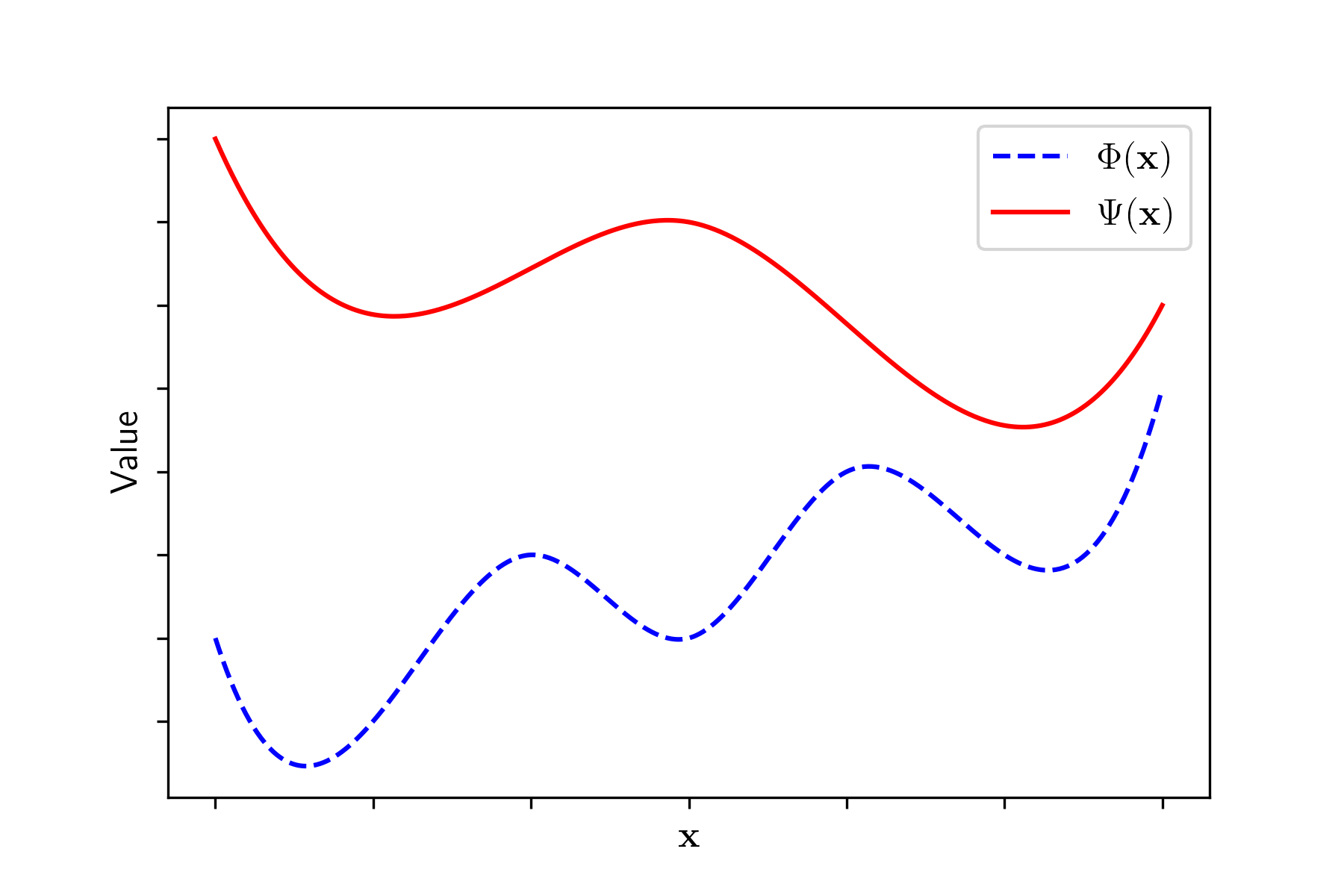}
    \caption{An illustration of $\Phi(\bm{x})$ and $\Psi(\bm{x})$.}
    \label{fig:potential}
\end{figure}

Next, we will explain how trust limit affects the convergence of diffusion using functions $\Phi(\bm{x})$ and $\Psi(\bm{x})$.
Fig. \ref{fig:potential} is an illustration of $\Phi(\bm{x})$ and $\Psi(\bm{x})$. The horizontal axis represents state $\bm{x}$, and the vertical axis represents function value. The global maximizer of $\Psi(\bm{x})$ is the all-$A$ state, and the global maximizer of $\Phi(\bm{x})$ is the risk-dominant equilibrium (which is all-$B$ state in this case). Any local maximizer of $\Phi(\bm{x})$ indicates a Nash equilibrium. Under the NE stochastic logit dynamics, the diffusion stochastically converges to the risk-dominant equilibrium that globally maximizes the potential function $\Phi(\bm{x})$. When players start to behave trustworthy, at each state, when the player's trust limit is so small that only the greedy best-response satisfies the constraint in \ref{eq:lt-best-resp-LP}, the diffusion progresses in the direction that maximizes $\Phi(\bm{x})$ and stochastically converges to the same direction as NE dynamics. When the player's trust limit is sufficiently large and both alternatives satisfy the constraint in \ref{eq:lt-best-resp-LP}, the LTE dynamics pushes the diffusion to progress towards the direction that maximizes $\Psi(\bm{x})$, which may result in a different convergence direction from the NE dynamics. Such a convergence direction is efficient because it maximizes social welfare globally.

\section{Mean Field Approximation for Diffusion Process and Markov Chain Analysis}\label{sec:mean-field-approx}

In this section, we study the following problem: given a certain number of current users in the social network, what is the probability that the superior alternative dominates the network or dies out eventually? And how rapidly will this take place? These questions are important from a managerial perspective as they allow companies to know how many advertising resources they should invest in the market such that their product prevails with high confidence. A reasonable decision is to allocate resources to influential individuals with high connectivity. However, companies often may not have a complete picture of the network's structure, especially if they expand into new markets. Even if the network structure is fully known, it is computationally difficult to determine the most influential individuals on a large-scale network, hence justifying the use of approximation schemes. 
Thus, we study the problem pessimistically by assuming that the network structure is unknown. We first introduce the mean-field approximation of diffusion on a class of graphs and then convert the diffusion process to the evolution of a reduced-size Markov chain.
By analyzing such a Markov chain, we study the outcome and convergence rate of the diffusion dynamics to answer the above questions.


\subsection{Mean Field Approximation}\label{subsec:mft}

The mean-field theory provides a powerful tool to study the diffusion dynamics over large-scale networks with random topologies \cite{benaim2003deterministic,lopez2006contagion}. We study the diffusion dynamics over a class of networks with the same degree sequence without explicitly specifying the network structure.
To describe our mean-field analysis, we partition the set of players into groups based on their degrees. All interactions are approximated by an average interaction and represented by the \emph{mean-field parameter}, same for all vertices independent of their connectivity and position. Moreover, it is assumed that there is no spatial correlation among the players choosing $A$ or $B$ across time. Hence, based on the assumptions, given an arbitrary player with degree $k$, each of its neighbors is in state $A$ with probability $\theta$, and the number of neighbors in state $A$ follows a binomial distribution $Bin(k,\theta)$.

Suppose that the network has $n$ vertices and degree distribution $(P(k))_{k=1}^K$, where $K$ is the maximum degree of the network. Without loss of generality, suppose that the graph is connected; otherwise, we can analyze each component separately. Let $\langle k \rangle = \sum_{k=1}^K k P(k)$ denote the average degree of graph $G$. Denote by $f(k)$ the proportion of players with degree $k$ in state $A$. Define the mean-field parameter $\theta$ as follows:
\begin{equation} \label{eq:mft-mf-par}
    \theta = \frac{1}{\langle k \rangle} \sum_{k=1}^K k P(k) f(k).
\end{equation}
$\theta$ is the probability that a given link is incident to a player in state $A$. It also approximates all interactions in the network.
Given a player with degree $k$ and trust limit $\delta_k$, define the \emph{normalized trust limit}
\begin{equation}
    \delta_k' = \frac{\delta_k}{k},
\end{equation}
which represents the average trust level that the player distributes to each of its neighbors. 
It is assumed that players with the same degree have the same normalized trust limit as they have similar social connectivity. Players of degree $k$ have normalized trust limit $\delta_k'$. However, the normalized trust limit can be different for different degree groups, and $\delta_k'$ can be different for different values of $k$.

Given the payoff matrix, a player with degree $k$, $k_A$ neighbors in state $A$ and normalized trust limit $\delta_k'$, the utility and social welfare can be computed, and hence the probability that the player chooses $A$ can be calculated using the NE diffusion model by Eq. \eqref{eq:nash-diffusion-logit-prob} or the LTE diffusion model by Eqs. \eqref{eq:lte-diffusion-1-prob}--\eqref{eq:lte-diffusion-2b-prob} as stated in Section \ref{sec:model}. 
Denote this probability by \(P(A|k,k_A,\delta_k')\), where we recall that $k_A \sim Bin(k,\theta)$. If we denote by \(Q(A|k,\theta,\delta_k')\) the overall probability that a player with degree $k$ and normalized trust limit $\delta_k'$ chooses $A$, we can compute \(Q(A|k,\theta,\delta_k')\) by taking the expectation of \(P(A|k,k_A,\delta_k')\) with respect to $k_A$ to get
\begin{align} 
    & Q(A|k,\theta,\delta_k') = \mb{E}_{k_A \sim Bin(k,\theta)}[P(A|k,k_A,\delta_k')] \notag \\
    & \quad = \sum_{k_A=0}^k P(A|k,k_A,\delta_k') \binom{k}{k_A} \theta^{k_A} (1-\theta)^{(k-k_A)}. \label{eq:mft-Q-defn}
\end{align}
As the probability term $P(A|k,k_A,\delta_k')$ embodies diffusion dynamics and the binomial term $\binom{k}{k_A} \theta^{k_A} (1-\theta)^{(k-k_A)}$ embodies the network structure, Eq. \eqref{eq:mft-Q-defn} connects the diffusion model and the social network in a convenient fashion.

\subsection{Markov Chain for Diffusion Process} \label{subsec:Markov-chain}

Based on the mean-field approximation, the diffusion process can be modeled as a discrete-time Markov chain. Given a degree distribution $(P(k))_{k=1}^K$ and the number of vertices $n$ of the graph, one can compute the degree sequence. Let \(\bm{n} = (n_1, n_2, \cdots, n_K)\) be a $K$-dimensional tuple such that $n_k$ is the number of vertices of degree $k$, i.e., $n_k = nP(k)$. Let \(\mc{S} = \{0, 1, \cdots, n_1\} \times \{0, 1, \cdots, n_2\} \times \cdots \times \{0, 1, \cdots, n_K\}\), and define \(\bm{y} = (y_1, y_2, \cdots, y_K)\), where $y_k \in \{0, 1, \cdots, n_k\}$ is the number of players with degree $k$ in state $A$. Clearly, $\mc{S}$ is the feasible set for $\bm{y}$. One can construct a Markov process $\bm{Y}(t) = (Y_1(t), \cdots, Y_K(t))$ on state space $\mc{S}$.\footnote{The Markov chain model can be generalized to the case of higher-order Markovity in which players have memory of the last $r$ states where $r \geq 1$. Define \emph{concatenated state} $\widetilde{\bm{Y}}(t) \triangleq (\bm{Y}(t-r+1), \bm{Y}(t-r+2), \cdots, \bm{Y}(t))$ for $t \geq r$. When $t < r$, the first $r-t$ entries of $\widetilde{\bm{Y}}(t)$ are filled with $\varnothing$. The random process $\widetilde{\bm{Y}}(t)$ on state space $\{\mc{S} \cup \{\varnothing\}\}^{r}$ is a Markov process.} The size of state space is \(|\mc{S}| = \prod_{k=1}^K (n_k+1)\), and thus \(\Omega(n) \leq |\mc{S}| \leq O(2^n)\) where $\Omega$ and $O$ are the notations for asymptotic lower and upper bounds respectively.

Given a state $\bm{y} = (y_1, \cdots, y_K)$, there are at most $(2K+1)$ transitions from $\bm{y}$. At most $K$ of them are \emph{forward transitions}: \((y_1, \cdots, y_k, \cdots, y_K) \rightarrow (y_1, \cdots, y_{k-1}, y_k+1, y_{k+1}, \cdots, y_K)\) for some $k\in [K]$; at most $K$ of them are \emph{backward transitions}: \((y_1, \cdots, y_k, \cdots, y_K) \rightarrow (y_1, \cdots, y_{k-1}, y_k-1, y_{k+1}, \cdots, y_K)\) for some $k\in [K]$; one is a self-loop \((y_1, \cdots, y_K) \rightarrow (y_1, \cdots, y_K)\). An illustration of forward and backward transitions and self-loop is shown in Fig. \ref{fig:mc-transition}.
A forward transition implies that one more player adopts the superior alternative $A$, whereas a backward transition implies that one fewer player adopts the superior alternative.
Using the mean-field approximation, one can derive the transition probabilities between different states of the Markov chain as: 
\begin{align}
    & P[(y_1, \cdots, y_K), (y_1, \cdots, y_{k-1}, y_k+1, y_{k+1}, \cdots, y_K)] \notag \\
    & \quad = \frac{n_k - y_k}{n} Q(A|k, \theta(\bm{y}), \delta_k'), \label{eq:mc-forw-trans-prob} \\
    & P[(y_1, \cdots, y_K), (y_1, \cdots, y_{k-1}, y_k-1, y_{k+1}, \cdots, y_K)] \notag \\
    & \quad = \frac{y_k}{n} \left(1 - Q(A|k, \theta(\bm{y}), \delta_k')\right), \label{eq:mc-back-trans-prob} \\
    & P[(y_1, \cdots, y_K), (y_1, \cdots, y_K)] \notag \\
    & \quad = 1 - \frac{n_k - 2y_k}{n} Q(A|k, \theta(\bm{y}), \delta_k') - \frac{y_k}{n}, \label{eq:mc-self-loop-prob} 
\end{align}
where the mean-field parameter $\theta$ is now a function of state $\bm{y}$ and can be calculated as 
\begin{equation} \label{eq:mft-mc-theta}
    \theta(\bm{y}) = \frac{\sum_{k=1}^K k y_k}{n \langle k \rangle}.
\end{equation}
Combining Eqs. \eqref{eq:mft-Q-defn}--\eqref{eq:mft-mc-theta}, one can characterize the one-step transition probability matrix of the Markov chain. 
The Markov chain can be viewed as a random walk with heterogeneous transition probabilities on a $K$-dimensional lattice graph with the length of the $k$-th dimension being $(n_k + 1)$. This lattice graph is named the \emph{underlying graph} of the Markov chain. A Markov chain is $k$-dimensional if its underlying graph is $k$-dimensional.

\begin{figure}[ht]
    \captionsetup{font=small}
    \centering
    \includegraphics[width=.4\textwidth]{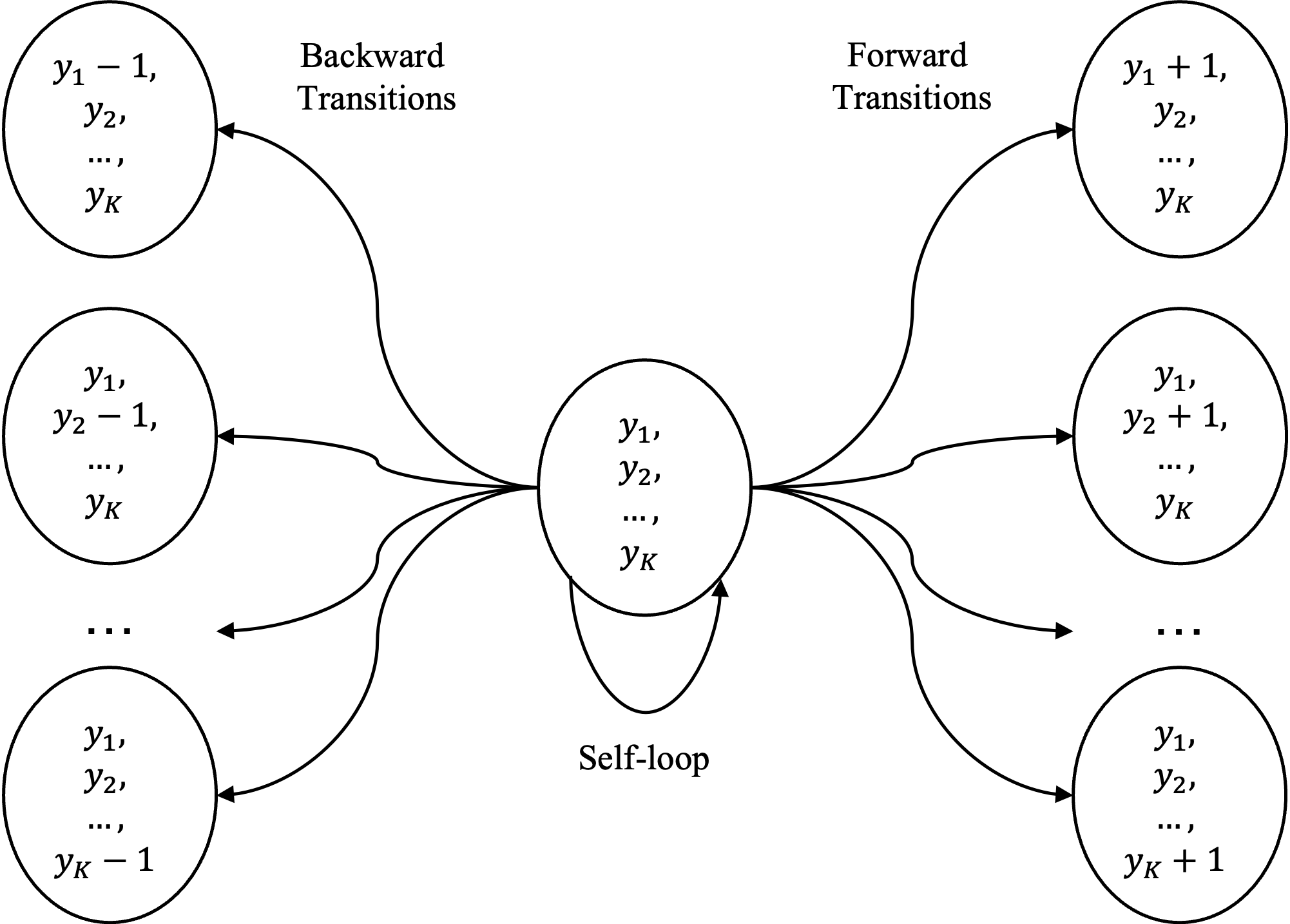}
    \caption{Illustration of forward and backward transitions and self-loop from state $(y_1, \cdots, y_K)$. (Transitions from other states are not shown.)}
    \label{fig:mc-transition}
\end{figure}

Two properties of the diffusion process are of interest: the \emph{probability of domination of the superior alternative} and the \emph{expected time to equilibrium}. The first one answers the question of how likely is the superior alternative to dominate the social network, and the second one answers the question of how fast, on average, the superior alternative dominates the market or dies out. We assume that the diffusion process terminates once either $A$ or $B$ completely dominates the network. In this regard, there are two absorbing states in the above Markov chain: the state $\bm{y}=\bm{0}$ and the state $\bm{y}=(n_1, \cdots, n_K)$. All other states are transient. Hence, given an arbitrary initial state, the Markov chain will eventually be absorbed into one of the absorbing states; that is, all individuals in the social network will adopt either $A$ or $B$. The \emph{absorption probability} and the \emph{expected time to absorption} are two important properties of absorbing Markov chains. The absorption probability to the superior-alternative-dominating state $\bm{y}=(n_1, \cdots, n_K)$ (i.e., all-$A$ state) reflects the probability of domination of the superior alternative; the expected time to absorption corresponds to the expected time to reach an equilibrium. 

The absorption probability and the expected time to absorption can be calculated by solving a system of linear equations. Let $\bm{P}$ denote the transition probability matrix of the Markov chain, and $\bm{T}$ and $\bm{R}$ be the submatrices of $\bm{P}$ which represent the transition among transients states and from transient states to absorbing states, respectively. The absorption probability $\bm{W}$ and expected time to absorption $\bm{\tau}$ can be calculated by solving \(\bm{W} = \bm{T}\bm{W} + \bm{R}\) and \(\bm{\tau} = \bm{T}\bm{\tau} + \bm{1}\). In Section \ref{sec:simulation}, we present the results of solving these two systems of equations and the results of simulation on a class of graphs to show that mean-field approximation estimates the convergence property of LTE diffusion dynamics reasonably well.

An important special case is when the underlying social network is a $k$-regular graph, in which case the above Markov chain becomes a random walk on the path and reduces to a 1-dimensional Markov chain. The state space is \(\mc{S} = \{0, 1, \cdots, n\}\), and each state \(y\in \mc{S}\) represents the number of agents adopting the superior alternative in the network. Given an arbitrary state $y\in \mc{S}$, the mean-field parameter is $\theta(y) = y/n$. Absorption probability and expected time to absorption for such Markov chain can be characterized in a close form as follows. Let $a_y = P[y,y+1]$ be the forward transition probability of the 1-dimensional Markov chain, and $b_y = P[y,y-1]$ be the backward transition probability. For simplicity, we define $a_0 = b_0 = a_n = b_n = 0$ since states $0$ and $n$ are absorbing states. The one-step transition probability matrix $\bm{P}\in \mb{R}^{(n+1)\times(n+1)}$ can be represented as follows:
\begin{equation} \label{eq:mc-trans-prob-mat} 
    P_{i,j} = 
    \begin{cases}
        a_i, & \text{if $j=i+1$}; \\
        b_i, & \text{if $j=i-1$}; \\
        1-a_i-b_i, & \text{if $i=j$}; \\
        0, & \text{otherwise}.
    \end{cases}
\end{equation}
Denote by $W^0_y$ the probability that the chain is absorbed in state 0 given the initial state $y$ and by $W^n_y$ the probability that the chain is absorbed in state $n$ given the initial state $y$. Also, let $\tau_{y}$ be the expected time that the chain is absorbed given the initial state $y$. The following theorem gives a characterization of $W^n_y$, $W^0_y$, and $\tau_y$.

\begin{theorem} \label{thm:absorbing}
Let 
\begin{equation*}
    \Delta = \sum_{i=1}^n \prod_{j=i}^{n-1} a_j \prod_{j=1}^{i-1} b_j.
\end{equation*}
Then $W^n_y$, $W^0_y$ and $\tau_y$ can be characterized as follows:
\begin{align}
    W_y^n & = \frac{1}{\Delta} \sum_{i=1}^y \prod_{j=i}^{n-1} a_j \prod_{j=1}^{i-1} b_j, \label{eq:thm-prob-n} \\
    W_y^0 & = 1 - P_y^n = \frac{1}{\Delta} \sum_{i=y+1}^n \prod_{j=i}^{n-1} a_j \prod_{j=1}^{i-1} b_j, \label{eq:thm-prob-0} \\
    \pi_y & = \frac{1}{\Delta} \Big[\sum_{j=1}^{n-1} \Big(\sum_{k=1}^{\min\{y,j\}}\prod_{l=k}^{j-1} a_l \prod_{l=1}^{k-1} b_l\Big) \notag \\
    & \qquad\qquad \Big(\sum_{k= \max\{y,j\}+1}^n \prod_{l=k}^{n-1} a_l \prod_{l=j+1}^{k-1} b_l\Big) \Big]. \label{eq:thm-time}
\end{align}
\end{theorem}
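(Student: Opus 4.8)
The plan is to reduce all three quantities to one-step (first-step) analysis, which for the birth-death chain \eqref{eq:mc-trans-prob-mat} produces second-order linear recurrences in $y$ that can be solved in closed form. Throughout I would use that $0<a_y,b_y$ and $a_y+b_y<1$ for $1\le y\le n-1$ (true whenever the logit sensitivities are finite, and otherwise a limiting argument applies), so that all the ratios and the scale function below are well-defined. Since, as already noted, the chain is absorbed almost surely (hence in finite expected time), we have $W^0_y+W^n_y=1$ for all $y$, which in particular gives the first equality in \eqref{eq:thm-prob-0} (reading $P^n_y$ there as $W^n_y$), and $\tau_y<\infty$.

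\emph{Absorption probabilities.} Conditioning on the first step gives, for $1\le y\le n-1$, $W^n_y=a_yW^n_{y+1}+b_yW^n_{y-1}+(1-a_y-b_y)W^n_y$ with $W^n_0=0$, $W^n_n=1$, i.e.\ $a_y(W^n_{y+1}-W^n_y)=b_y(W^n_y-W^n_{y-1})$. Writing $d_y:=W^n_{y+1}-W^n_y$ gives $d_y=(b_y/a_y)\,d_{y-1}$, hence $d_y=d_0\prod_{j=1}^{y}(b_j/a_j)$; telescoping yields $W^n_y=d_0\sum_{i=0}^{y-1}\prod_{j=1}^{i}(b_j/a_j)$, and $W^n_n=1$ fixes $d_0$. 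Multiplying numerator and denominator by $\prod_{j=1}^{n-1}a_j$ converts $\prod_{j=1}^{n-1}a_j\prod_{j=1}^{i}(b_j/a_j)$ into $\prod_{j=i+1}^{n-1}a_j\prod_{j=1}^{i}b_j$, and after the reindexing $i\mapsto i-1$ the denominator becomes exactly $\Delta$ and the numerator becomes $\sum_{i=1}^{y}\prod_{j=i}^{n-1}a_j\prod_{j=1}^{i-1}b_j$, which is \eqref{eq:thm-prob-n}. Subtracting this partial sum from the full sum $\Delta$ gives \eqref{eq:thm-prob-0}.

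\emph{Expected absorption time.} I would compute $\tau_y=\sum_{j=1}^{n-1}N_{yj}$, where $\bm{N}=(\bm{I}-\bm{T})^{-1}$ is the fundamental matrix and $N_{yj}$ is the expected number of visits to the transient state $j$ started from $y$. For the birth-death chain, introduce the scale function $\phi(y):=\sum_{i=0}^{y-1}\prod_{j=1}^{i}(b_j/a_j)$ (with $\phi(0)=0$) and $\psi:=\phi(n)-\phi$ (with $\psi(n)=0$); both are harmonic on $\{1,\dots,n-1\}$ with the appropriate boundary behaviour. A standard gambler's-ruin / one-excursion computation gives $N_{jj}=\phi(j)\psi(j)/W(j)$ with $W(j)=\phi(n)\,b_j\prod_{l=1}^{j-1}(b_l/a_l)$, and the strong Markov property then yields $N_{yj}=\phi(\min\{y,j\})\,\psi(\max\{y,j\})/W(j)$. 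Clearing the $a_l$'s and reindexing exactly as in the previous paragraph turns $\phi(\min\{y,j\})$ into $\frac{1}{\prod_{l=1}^{j-1}a_l}\sum_{k=1}^{\min\{y,j\}}\prod_{l=k}^{j-1}a_l\prod_{l=1}^{k-1}b_l$ and $\psi(\max\{y,j\})$ into $\frac{\prod_{l=1}^{j}b_l}{\prod_{l=1}^{n-1}a_l}\sum_{k=\max\{y,j\}+1}^{n}\prod_{l=k}^{n-1}a_l\prod_{l=j+1}^{k-1}b_l$, while $W(j)$ absorbs the leftover factor so that all pieces share the single normalizer $\phi(n)\prod_{l=1}^{n-1}a_l=\Delta$. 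Summing over $j$ then gives \eqref{eq:thm-time}. Equivalently, one can solve the inhomogeneous recurrence $a_y(\tau_{y+1}-\tau_y)-b_y(\tau_y-\tau_{y-1})=-1$, $\tau_0=\tau_n=0$, for the difference $\tau_{y+1}-\tau_y$ by the same integrating-factor/telescoping trick, then swap the order of the resulting double sum and collect terms by the pivot index $j$; this produces the same $\min\{y,j\}$/$\max\{y,j\}$ split.

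\emph{Main obstacle.} The $W^n_y$ and $W^0_y$ formulas are the classical gambler's-ruin computation and are routine. The real work is in \eqref{eq:thm-time}: identifying the Green's-function normalizer $W(j)$ (or, on the recurrence route, the particular solution of the inhomogeneous recurrence), and then carrying out the reindexing and clearing of $\prod a_l$ factors carefully enough that all three pieces end up over the same $\Delta$ and the double sum assembles into the symmetric $\min$/$\max$ form. As a consistency check I would verify the boundary cases $y=0$ and $y=n$: in \eqref{eq:thm-prob-n}--\eqref{eq:thm-prob-0} the sums must collapse to $0$ and $1$ (in the right order), and in \eqref{eq:thm-time} one of the two inner sums must be empty so that $\tau_0=\tau_n=0$.
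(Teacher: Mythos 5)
Your proposal is correct, and it reaches the theorem by a genuinely different route from the paper. The paper's proof is purely linear-algebraic: it writes $\bm{P}$ in canonical form, proves by induction on $n$ (via the standard three-term recursion for tridiagonal determinants) that $\det(\bm{I}-\bm{T})=\Delta$, then \emph{guesses} the closed form for $(\bm{I}-\bm{T})^{-1}_{i,j}$ — which is exactly your $N_{yj}=S_1S_2/\Delta$ — and verifies it by multiplying row $i$ of the candidate inverse against column $j$ of $\bm{I}-\bm{T}$ and checking the result is $\mathds{1}\{i=j\}$; the theorem then follows from $\bm{W}=(\bm{I}-\bm{T})^{-1}\bm{R}$ and $\bm{\tau}=(\bm{I}-\bm{T})^{-1}\bm{1}$. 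You instead derive the absorption probabilities by the gambler's-ruin recurrence and obtain the entries of the fundamental matrix probabilistically, as the Green's function $\phi(\min\{y,j\})\psi(\max\{y,j\})/W(j)$ via excursion counting and the strong Markov property. I checked that your normalizer $W(j)=\phi(n)\,b_j\prod_{l<j}(b_l/a_l)$ is the right one (the self-loop probability cancels in the no-return computation, as it must) and that clearing the $a_l$ factors reproduces the paper's Eq.~\eqref{eq:lemma-mat-inv} with common denominator $\phi(n)\prod_{l=1}^{n-1}a_l=\Delta$. The trade-off: the paper's verification is self-contained and mechanical but gives no hint of where the $\min/\max$ product structure comes from, whereas your derivation explains it (it is the product of the two harmonic functions vanishing at the respective boundaries, divided by their Wronskian) at the cost of invoking standard birth–death-chain machinery; your route also has to separately justify that $\sum_j N_{yj}=\tau_y$ and $W^0_y+W^n_y=1$, which the paper gets for free from the fundamental-matrix identities.
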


\begin{remark}
For non-regular graphs, one cannot expect to find a closed-form expression for the absorption probability and the expected absorption time as, in general, they involve an exponential number of terms. Nevertheless, using the same approach to solving a system of linear equations
\(
    \bm{W} = \bm{T}\bm{W} + \bm{R}
\)
and 
\(
    \bm{\tau} = \bm{T}\bm{\tau} + \bm{1}
\),
one can find these quantities for general chains in polynomial time in terms of the problem parameters.
\end{remark}



One application of the Markov chain analysis for LTE diffusion model is to address the seed placement problem. On the one hand, given an allocation of ``free samples'' to the players based on their degree, which is equivalent to the initial state of the Markov chain, one can compute the absorption probability starting from that initial state to estimate the likelihood of domination of the superior alternative. On the other hand, given a desired confidence level $\alpha$, one can compute how many initial seeds (adopters) are needed and how the free samples are allocated among the players by looking up the initial states in $\bm{W}$. In a 1-dimensional Markov chain, the initial states from which the Markov chain is absorbed in state $n$ (all-$A$ state) with a probability of at least $\alpha$ are given by \(    \left\{ y\in \{0,1,\ldots,n\}: W_y^n \geq \alpha\right\}\),
where $W_y^n$ is the probability that the chain is absorbed in state $n$ given initial state $y$, the expression of which is given by Eq. \eqref{eq:thm-prob-n}. To find the minimum number of free samples, one can find the minimum value of $y$ that belongs to the set. In a high-dimensional Markov chain, not only the number of free samples, but also the distribution among the players with different degrees affects the absorption probability. Denote by $W^{\bm{n}}_{\bm{y}}$ the absorption probability to the state $(n_1, \cdots, n_K)$ starting from an arbitrary state $\bm{y} = (y_1, \cdots, y_K) \in \mc{S}$. Recall that $\mc{S}$ is the state space and \(\mc{S} = \{0, 1, \cdots, n_1\} \times \{0, 1, \cdots, n_2\} \times \cdots \times \{0, 1, \cdots, n_K\}\).
In that case, the initial states from which the Markov chain is absorbed in all-$A$ state $(n_1, \cdots, n_K)$ with probability at least $\alpha$ are given by \(\left\{ \bm{y} \in \mc{S}: W^{\bm{n}}_{\bm{y}} \geq \alpha \right\}\).

\section{Threshold Diffusion Model under Limited-trust Equilibrium}\label{sec:threshold-model}

As we saw in Section \ref{sec:preliminaries}, the LT model is a well-studied diffusion model for influence maximization over social networks. In this section, we will show that the LTE diffusion model can be converted to a non-progressive LT model if all players update their states deterministically and synchronously by following the best-response rule. Another advantage of such a reduction to the LT model is that it allows us to study the convergence behavior of LTE dynamics under certain assumptions using the rich set of results on the LT model. 


Let us consider the case where all players update their purchase option synchronously at each time instant and strictly follow the best-response strategy on the notion of LTE. That being said, there is no randomness in the diffusion model. Using the same notation as in Section \ref{sec:model}, the \emph{best-response LTE (BR-LTE) diffusion model} can be described as follows: 

\medskip
\noindent
\underline{BR-LTE Diffusion Model}
\begin{enumerate}[label=\alph*)]
    \item \(u'_i(\bm{x}_{-i}) \geq u^*_i(\bm{x}_{-i}) - \delta_i\), 
    \begin{equation} \label{eq:br-lte-diffusion-a-prob}
        P(x_i \in W_i(\bm{x}_{-i})|\bm{x}_{-i}) = 1;
    \end{equation}
    
    \item \(u'_i(\bm{x}_{-i}) < u^*_i(\bm{x}_{-i}) - \delta_i\),
    \begin{equation} \label{eq:br-lte-diffusion-b-prob}
        P(x_i \in U_i(\bm{x}_{-i})|\bm{x}_{-i}) = 1.
    \end{equation}
\end{enumerate}
where $u'_i(\bm{x}_{-i})$, $u^*_i(\bm{x}_{-i})$, $W_i(\bm{x}_{-i})$ and $U_i(\bm{x}_{-i})$ are defined as in Section \ref{sec:model}. In fact, the probability functions of BR-LTE diffusion model are equivalent to the LTE diffusion model by setting $\beta \to \infty$ and $\beta' \to \infty$ such that the randomness represented by $1/\beta$ and $1/\beta'$ approaches zero.

The BR-LTE diffusion model can be converted to a non-progressive LT (NP-LT) model. In an NP-LT model, an active player is allowed to become inactive again if the proportion of the active neighbors is less than its threshold. This conversion establishes a connection between the epidemic and the game-theoretic diffusion models. Let product $A$ be the active state and product $B$ be the inactive state. We use the notation of $k^i$ and $k_A^i$ for the number of neighbors and the number of neighbors in state $A$ of player $i$, and $k_A^i/k^i$ to represent the proportion of the player's neighbors in state $A$. Define $q^*_i$ as the threshold that player $i$ becomes active. If $k_A^i/k^i \geq q^*_i$, player $i$ will adopt $A$ (becomes active); if $k_A^i/k^i < q^*_i$, player $i$ will adopt $B$ (becomes inactive). $q^*_i$ depends on the utility of the products, which is embodied in the payoff matrix, and the player's normalized trust limit $\delta_i'$. In general, the higher utility the superior alternative brings about to the player and its neighbors, the smaller the threshold will be; the more trustworthy the player is, the lower threshold it will have. Therefore, the BR-LTE model is equivalent to the following LT model named \emph{linear threshold LTE (LT-LTE) diffusion model}:

\noindent
\underline{LT-LTE Diffusion Model}
\begin{equation} \label{eq:lte-threshold-prob}
    P(x_i = A | \bm{x}_{-i}) = 
    \begin{cases}
        0 \quad \text{if $k_A^i/k^i < q^*_i$} \\
        1 \quad \text{if $k_A^i/k^i \geq q^*_i$}
    \end{cases}
\end{equation}
where 
\begin{align}\label{eq:lte-threshold-q-star}
    q^*_i = \min \Bigg\{ & \max \left\{0, \frac{2b-c-d}{2(a+b-c-d)}, \frac{b-c-\delta'_i}{a+b-c-d} \right\}, \notag \\ 
    & \frac{b-c+\delta'_i}{a+b-c-d} \Bigg\}.
\end{align}
Now, let us define 
\begin{align}\nonumber
    q_w &= \max \left\{0, \frac{2b-c-d}{2(a+b-c-d)}\right\}, \quad q_u = \frac{b-c}{a+b-c-d},\cr
    \widetilde{\delta}'_i &= \frac{\delta'_i}{a+b-c-d}.
\end{align}
Here, $q_w$ represents the proportion of neighbors in state $A$ such that $A$ and $B$ yield same social welfare. Similarly, $q_u$ represents the proportion of neighbors in state $A$ such that $A$ and $B$ yield same utility to player. Moreover, $\widetilde{\delta}$ represents the relative normalized trust limit of player $i$ with respect to the payoff matrix. Then, the threshold $q^*_i$ can be simplified as 
\begin{equation}\label{eq:lte-threshold-q-star-sim}
    q^*_i = \min \left\{ \max\left\{q_w, q_u-\widetilde{\delta}'_i\right\}, q_u+\widetilde{\delta}'_i\right\}. \tag{\ref{eq:lte-threshold-q-star}a}
\end{equation}

It is easy to see that $q^*_i$ is monotonically nonincreasing with respect to $\delta'_i$ or $\widetilde{\delta}'_i$, implying that players with high normalized trust limit have thresholds not greater than the players with low normalized trust limit. For all payoff matrices, the range of $q_w$ is $[0,1/2)$, and the range of $q_u$ is $(0,1)$. If $q_w < q_u$, $q^*_i\in [q_w, q_u)$ for all $\delta'_i>0$; if $q_w > q_u$, $q^*_i\in (q_u, q_w]$ for all $\delta'_i>0$; if $q_w = q_u$, $q^*_i$ is a constant and $q^*_i = q_w$ independent of the player's trust limit. Note that in the BR-LTE diffusion model, one needs the information about the payoff matrix, the player's trust limit, and the number of its neighbors to compute the best-response strategy numerically. In the LT-LTE model, this information is encoded in a single scalar parameter $q^*_i$.

\section{Simulation Results}\label{sec:simulation}

In this section, we provide various simulation results on the LTE diffusion model. First, we compare the diffusion evolution of NE and LTE dynamics on ego-Facebook social network \cite{snapnets}. Next, we compare the results obtained from Markov chain analysis in Section \ref{sec:mean-field-approx} and the numerical results of LTE diffusion dynamics on a class of graphs. 
Additional simulation results are available in the e-companion \cite{leon2022limitedtrust-arxiv}.


\subsection{Comparison between NE and LTE diffusion} \label{subsec:diff-sim}

In this subsection, we compare the diffusion process under NE and LTE dynamics. We simulate the diffusion evolution of the superior alternative with payoff matrices in Table \ref{tab:payoff-sim-1} and \ref{tab:payoff-sim-2}. The sensitivity parameters are $\beta = \beta' = 0.5$. The simulation is carried out on ego-Facebook social network \cite{snapnets} that is constructed from real-dataset to represent interactions between users of Facebook. The social network contains 4,039 nodes and 88,234 undirected edges. All players in the social network have the same normalized trust limit $\delta'$.
We consider various values of $\delta'$ from 0.5 to 2 and various numbers of initial adopters of superior alternative A (5\%, 10\%, 20\%, and 50\% of the total number of players). For each setting, the average result of 100 trials is taken. Figure \ref{fig:sim-egofb-adop} shows the average proportion of adoption of the superior alternative as diffusion progresses; Fig.\ \ref{fig:sim-egofb-util} shows the players' average utility versus diffusion time. Both figures show consistent trend of diffusion evolution.

\begin{table}[h]
    \captionsetup{font=small}
    \caption{Payoff matrices for simulation.}
    \begin{subtable}[h]{0.225\textwidth}
    \caption{}
    \centering
    \begin{tabular}{|c||c|c|}
        \hline
          & $A$ & $B$  \\ \hline \hline
        $A$ & 8, 8 & 2, 7 \\ \hline
        $B$ & 7, 2 & 4, 4 \\ \hline
    \end{tabular}
    \label{tab:payoff-sim-1}
    \end{subtable}
    \hfill
    \begin{subtable}[h]{0.225\textwidth}
    \caption{}
    \centering
    \begin{tabular}{|c||c|c|}
        \hline
          & $A$ & $B$  \\ \hline \hline
        $A$ & 8, 8 & 2, 5 \\ \hline
        $B$ & 5, 2 & 4, 4 \\ \hline
    \end{tabular}
    \label{tab:payoff-sim-2}
    \end{subtable}
    \label{tab:payoff-sim-exmp}
\end{table}

As shown in Figs.\ \ref{subfig:sim-egofb-adop-no-sw-gap} and\ \ref{subfig:sim-egofb-util-no-sw-gap}, under the NE dynamics, the superior alternative dies out regardless of the number of initial adopters, because $B$ is the risk-dominant strategy ($b-c = 2 > 1 = a-d$). This may happen when the new product $A$ has poor compatibility with product $B$. As the players start to behave trustworthy, though $B$ still prevails when $\delta'$ is small ($\delta' = 0.5$ and $1$ when the number of initial adopters is less than 20\% of the total number of players), the convergence rate is slower than the NE dynamics. As the players increase their trust limit $\delta'$ to 1.5, the superior alternative starts to dominate, and the diffusion converges to the all-$A$ state. When $\delta'$ becomes large ($\delta' = 2$), the superior alternative rapidly spreads over the network. 
In Figs.\ \ref{subfig:sim-egofb-adop-sw-gap} and\ \ref{subfig:sim-egofb-util-sw-gap}, under NE diffusion dynamics, the diffusion evolution does not converge to the all-A state when the number of initial adopters is only 5\%, 10\% and 20\% of the total number of players even though A is the risk-dominant strategy according to the payoff matrix in Table \ref{tab:payoff-sim-2}. However, the LTE diffusion dynamics are robust and drive the diffusion process towards the all-A state when the trust limit is moderately large. Similar to the previous case, under the LTE diffusion dynamics, when the number of initial adopters is 5\%, 10\%, and 20\% of the total number of players and when the trust limit is small ($\delta' = 0.5$), the LTE diffusion converges to the same equilibrium state where A dies out as the NE diffusion dynamics. As the trust limit increases, the equilibrium state completely changes to the state where almost all players adopt A. When the number of initial adopters is 50\% of the total number of players in the social network, under both NE and LTE diffusion dynamics, the diffusion process converges to the all-A state. Under the LTE diffusion dynamics, the superior alternative spreads faster than the case of NE dynamics.
For both payoff matrices and in most cases, no matter whether the superior alternative finally dominates the social network or dies out, the LTE model ensures higher proportion of adoption of the superior alternative and higher average utility than the NE model, and the difference becomes larger as the players' trust limit increases in a certain range. The only exception is the case when the payoff matrix is the one in Table \ref{tab:payoff-sim-2}, $\delta'=0.5$ and the number of initial adopters is 5\% of the total number of players shown in Fig. \ref{subfig:sim-egofb-adop-sw-gap}, and the LTE diffusion dynamics converges to a state with slightly lower adoption of the superior alternative than the NE dynamics. However, by comparing the results in Fig. \ref{subfig:sim-egofb-util-sw-gap}, we can see that the average utility of the LTE dynamics is as large as the NE dynamics. The slightly lower adoption portion may result from the redistribution of the superior alternative among high- and low-degree players, which does not adversely affect the players' average utility.

The results from Fig.\ \ref{fig:sim-egofb-util} show that when players behave trustworthy, the average utility is always greater than the case when players are solely self-interested as diffusion progresses. This difference becomes even more prominent when the trust limit increases. 
In Fig.\ \ref{subfig:sim-egofb-util-no-sw-gap} and for the payoff matrix in Table \ref{tab:payoff-sim-1}, during the diffusion, the average utility increases when players show moderate to large trust ($\delta' = 1.5$ and $2$), whereas it decreases when players are completely selfish (NE).
In Fig.\ \ref{subfig:sim-egofb-util-sw-gap} and for the payoff matrix in Table \ref{tab:payoff-sim-2}, similar results can be observed, and the average utility increases during the diffusion process when players show moderate to large trust. When the players are self-interested (NE), the average utility does not change significantly or increases slower than when the players are trustworthy.
Compared to Fig. \ref{fig:sim-egofb-adop}, these results suggest that the prevalence of the superior alternative is associated with the players' hope of obtaining higher long-term utility as a result of their neighbors' return in a dynamic setting.

\begin{figure}[h]
    \captionsetup{font=small}
    \centering
    \begin{subfigure}[h]{0.47\textwidth}
    \centering
    \includegraphics[trim=0.9in 0.25in 0.9in 0.25in,clip,width=\textwidth]{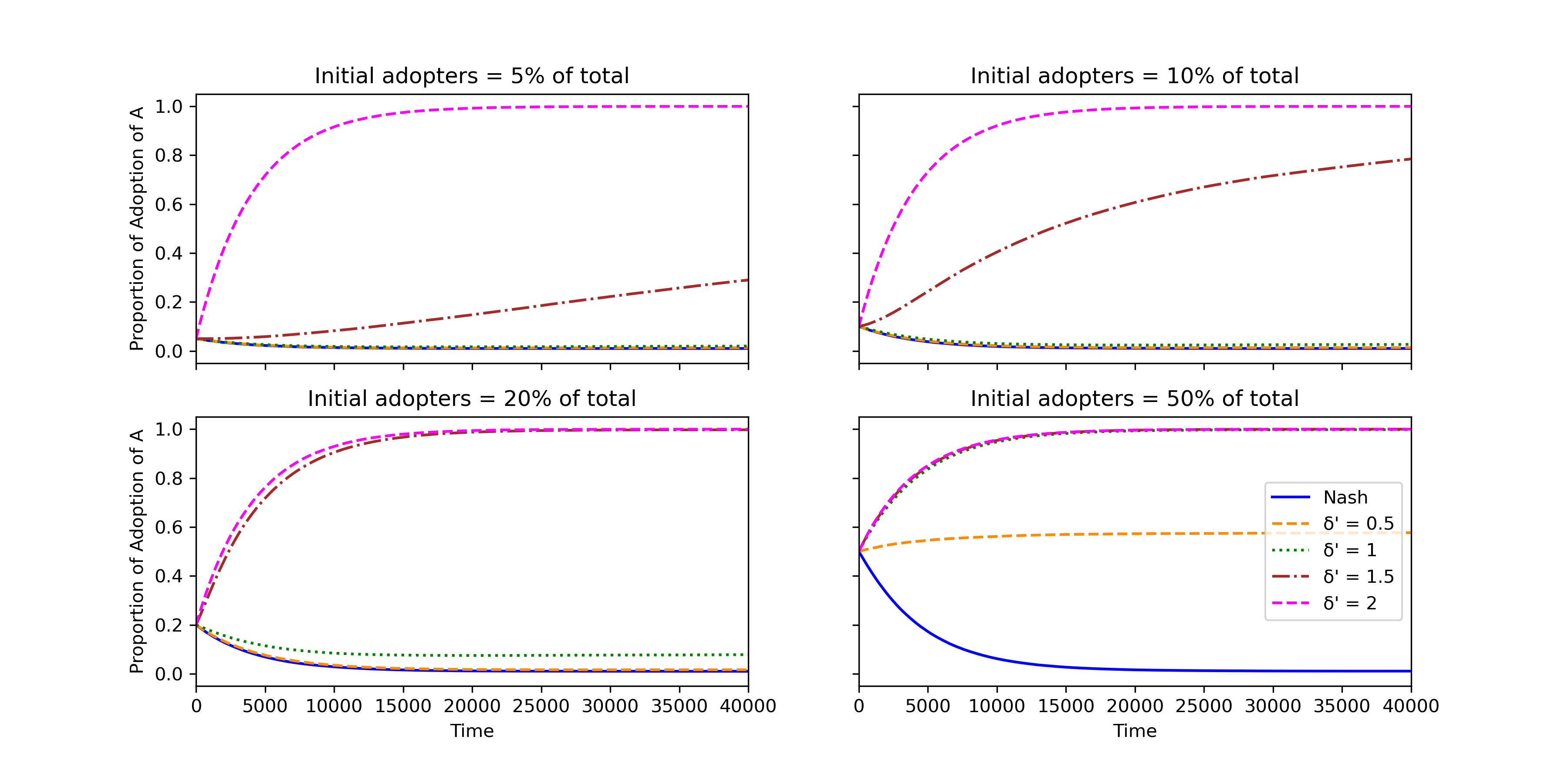}
    \subcaption{Payoff matrix in Table \ref{tab:payoff-sim-1}.}
    \label{subfig:sim-egofb-adop-no-sw-gap}
    \end{subfigure}
    \begin{subfigure}[h]{0.47\textwidth}
    \centering
    \includegraphics[trim=0.9in 0.25in 0.9in 0.25in,clip,width=\textwidth]{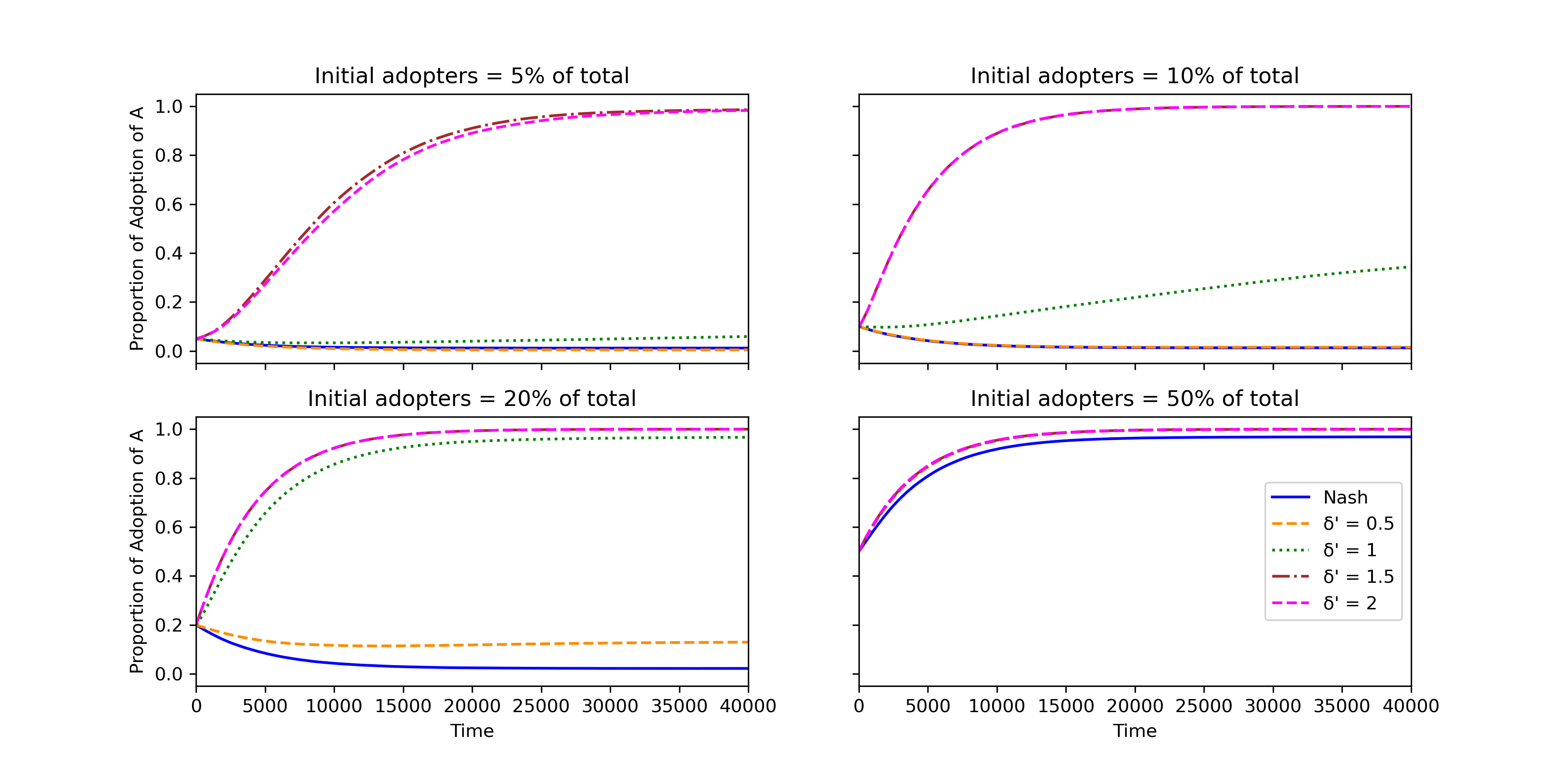}
    \subcaption{Payoff matrix in Table \ref{tab:payoff-sim-2}.}
    \label{subfig:sim-egofb-adop-sw-gap}
    \end{subfigure}
    \caption{Simulation results of diffusion on ego-Facebook social network: average proportion of adoption of A vs. time.}
    \label{fig:sim-egofb-adop}
\end{figure}

\begin{figure}[h]
    \captionsetup{font=small}
    \centering
    \begin{subfigure}[h]{0.47\textwidth}
    \centering
    \includegraphics[trim=0.9in 0.25in 0.9in 0.25in,clip,width=\textwidth]{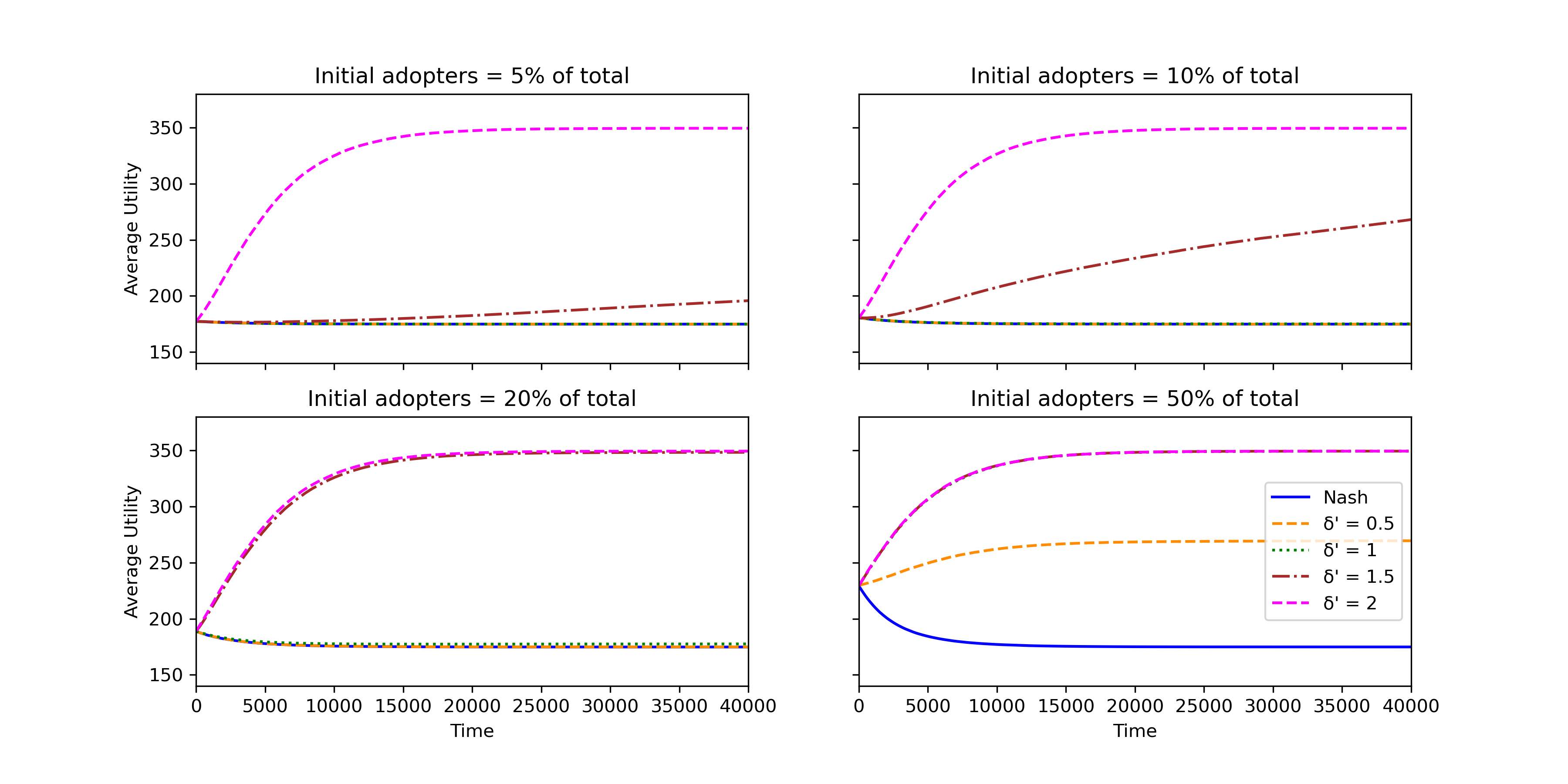}
    \subcaption{Payoff matrix in Table \ref{tab:payoff-sim-1}.}
    \label{subfig:sim-egofb-util-no-sw-gap}
    \end{subfigure}
    \begin{subfigure}[h]{0.47\textwidth}
    \centering
    \includegraphics[trim=0.9in 0.25in 0.9in 0.25in,clip,width=\textwidth]{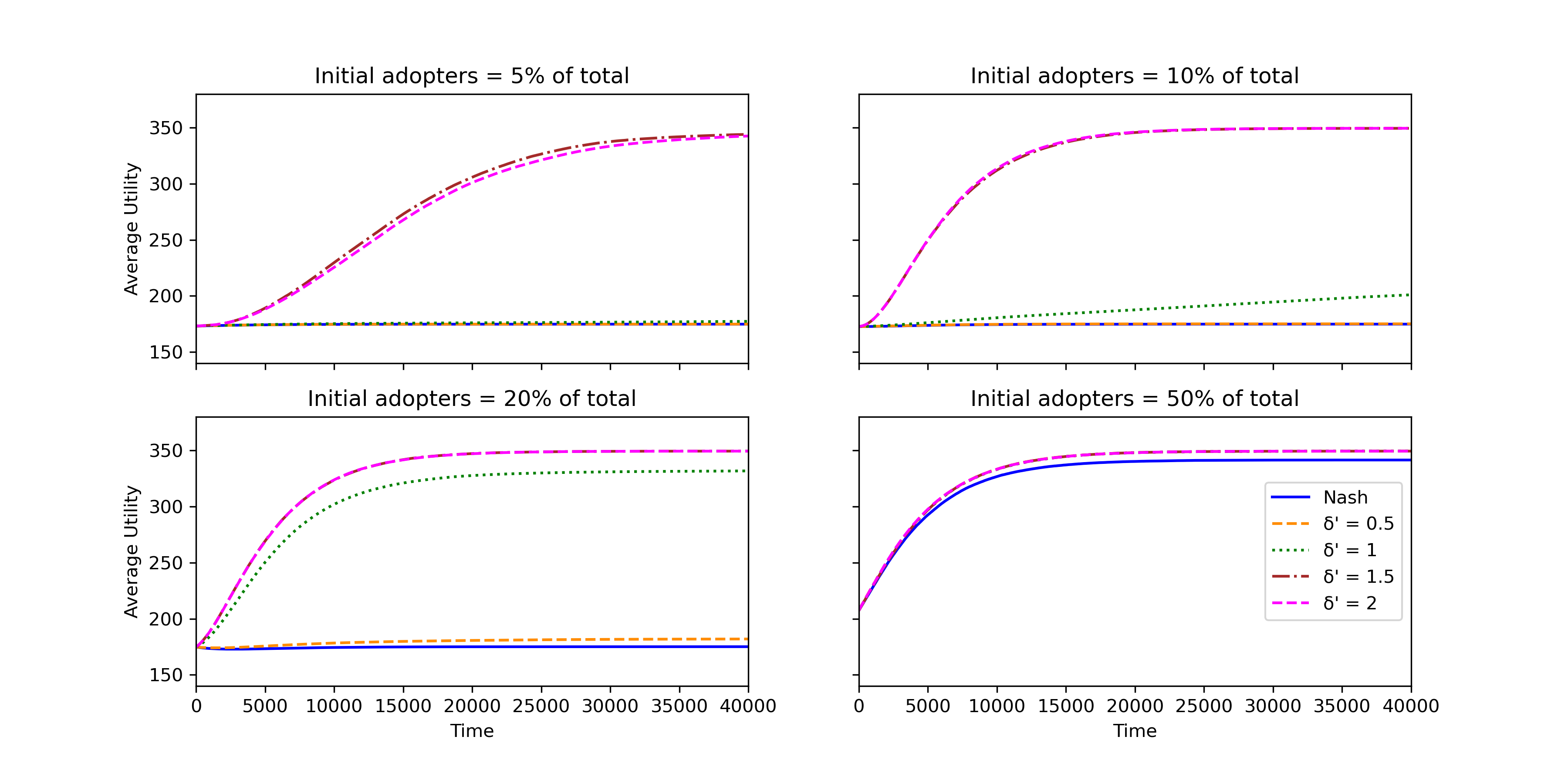}
    \subcaption{Payoff matrix in Table \ref{tab:payoff-sim-2}.}
    \label{subfig:sim-egofb-util-sw-gap}
    \end{subfigure}
    \caption{Simulation results of diffusion on ego-Facebook social network: average utility vs. time.}
    \label{fig:sim-egofb-util}
\end{figure}

\subsection{Comparison between LTE Diffusion Results and Markov Chain Analysis Results}

Next, we present the simulation results of diffusion over a class of networks specified by their degree distribution and compare them with the results obtained from the Markov chain analysis. We consider two types of degree distributions: (a) regular graphs, which correspond to a 1-dimensional Markov chain, and (b) graphs with two types of degree, which correspond to a 2-dimensional Markov chain.
Firstly, we simulate the diffusion on a class of 10-regular graphs with 1000 players.
All players have the same normalized trust limit $\delta'$.
The payoff matrix in Table \ref{tab:payoff-sim-1} is used, and various numbers of initial adopters (5\%, 10\%, 20\%, and 50\% of the total number of players) and values of normalized trust limit ($\delta'$ from 1 to 2) are used. The sensitivity parameters are $\beta=\beta'=1$. For each set of parameters, 100 trials are carried out, and a new network is randomly generated in each trial. The average result of 100 trials is taken. The results are shown in Fig.\ \ref{fig:sim-random-10-reg-comp}.

In Fig.\ \ref{fig:sim-random-10-reg-comp}, each color represents a set of three curves under the same $\delta'$. The curve in the solid line represents the diffusion simulation result; the horizontal dashed line and the vertical dotted line represent the absorption probability to the superior-alternative-dominating state (all-$A$ state) and the expected time to absorption computed from the Markov chain analysis respectively. Figure \ref{fig:sim-random-10-reg-comp} shows that the results from the Markov chain analysis accord with the simulation results. The diffusion (solid lines) converges to the equilibrium state indicated by the absorption probability (horizontal dashed lines), and equilibrium is roughly achieved at the expected time to absorption (vertical dotted lines) as the diffusion curves become flat. This shows that the 1-dimensional Markov chain model gives a good estimation of the convergence property of the LTE dynamics on random regular graphs.

\begin{figure}[h]
    \captionsetup{font=small}
    \centering
    \includegraphics[trim=0.9in 0.4in 0.9in 0.5in,clip,width=0.47\textwidth]{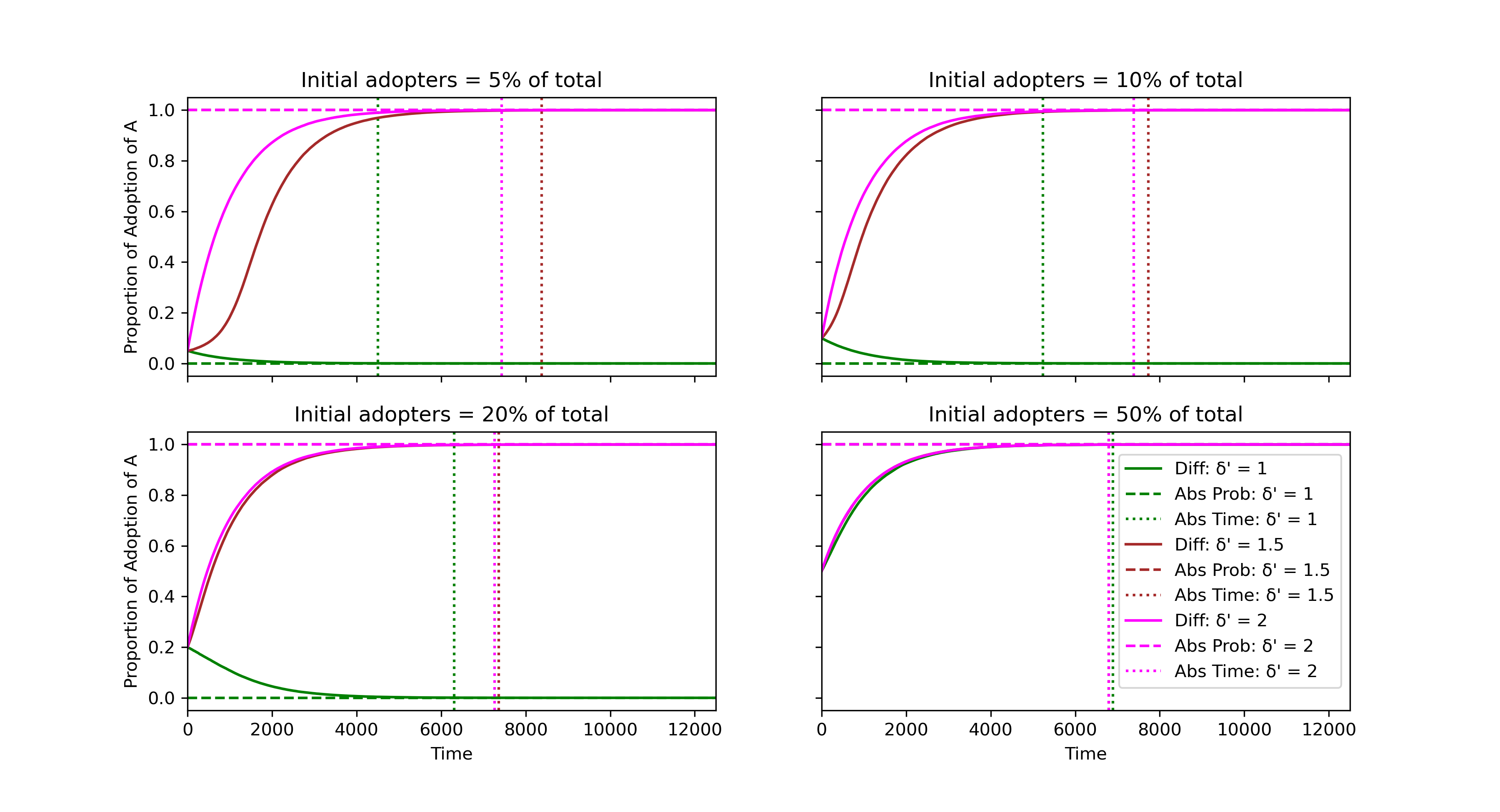}
    \caption{Simulation results of diffusion on a class of 10-regular graphs and computation results from 1d Markov chain analysis with payoff matrix in Table \ref{tab:payoff-sim-1}.}
    \label{fig:sim-random-10-reg-comp}
\end{figure}

Next, we simulate the diffusion over a class of networks with two types of degrees: 3 and 7. The corresponding Markov chain of such networks is 2-dimensional. There are 400 players in the network: 300 players with degree 3 and 100 players with degree 7. The average degree is 4. The payoff matrix in Table \ref{tab:payoff-sim-1} is used. The number of initial adopters are set to be 5\%, 10\%, 20\% and 50\% of the total number of players. The ``initial free samples'' are either proportionally distributed to high-degree and low-degree players or distributed to high-degree players as many as possible. All players have the same normalized trust limit $\delta'$.
Various values of normalized trust limit ($\delta'$ from 0.5 to 2) are used. The sensitivity parameters are $\beta=\beta'=1$. For each set of parameters, 100 trials are carried out, and a new network is randomly generated in each trial. The average result of 100 trials is taken. The results are shown in Fig.\ \ref{fig:sim-random-2d-comp}.

In Fig.\ \ref{fig:sim-random-2d-comp}, the diffusion simulation result, absorption probability to superior-alternative-dominating state, and expected time to absorption are plotted in solid, dashed, and dotted lines, respectively. Each color represents a value of $\delta'$. We can see from the results that when the absorption probability to the superior-alternative-dominating state (all-$A$ state) is close to 0 or 1 from Markov chain analysis, the Markov chain analysis matches the simulation result very well. When the absorption probability to all-A state is neither close to 1 nor close to 0, the prediction of the probability of domination of the superior alternative is less accurate. For example, in the bottom left plot of Fig. \ref{fig:sim-random-2d-comp}, when $\delta'=0.5$, the Markov chain analysis gives an intermediate value of absorption probability to the superior-alternative-dominating state. Nevertheless, the general trend of diffusion evolution of this case still matches the result of Markov chain analysis. The Markov chain model is still beneficial considering the simple construction and accurate estimation of diffusion process when the estimated probability of domination or dying out of the superior alternative is high.

\begin{figure}[h]
    \captionsetup{font=small}
    \centering
    \includegraphics[trim=0.9in 1.25in 0.9in 1.4in,clip,width=0.47\textwidth]{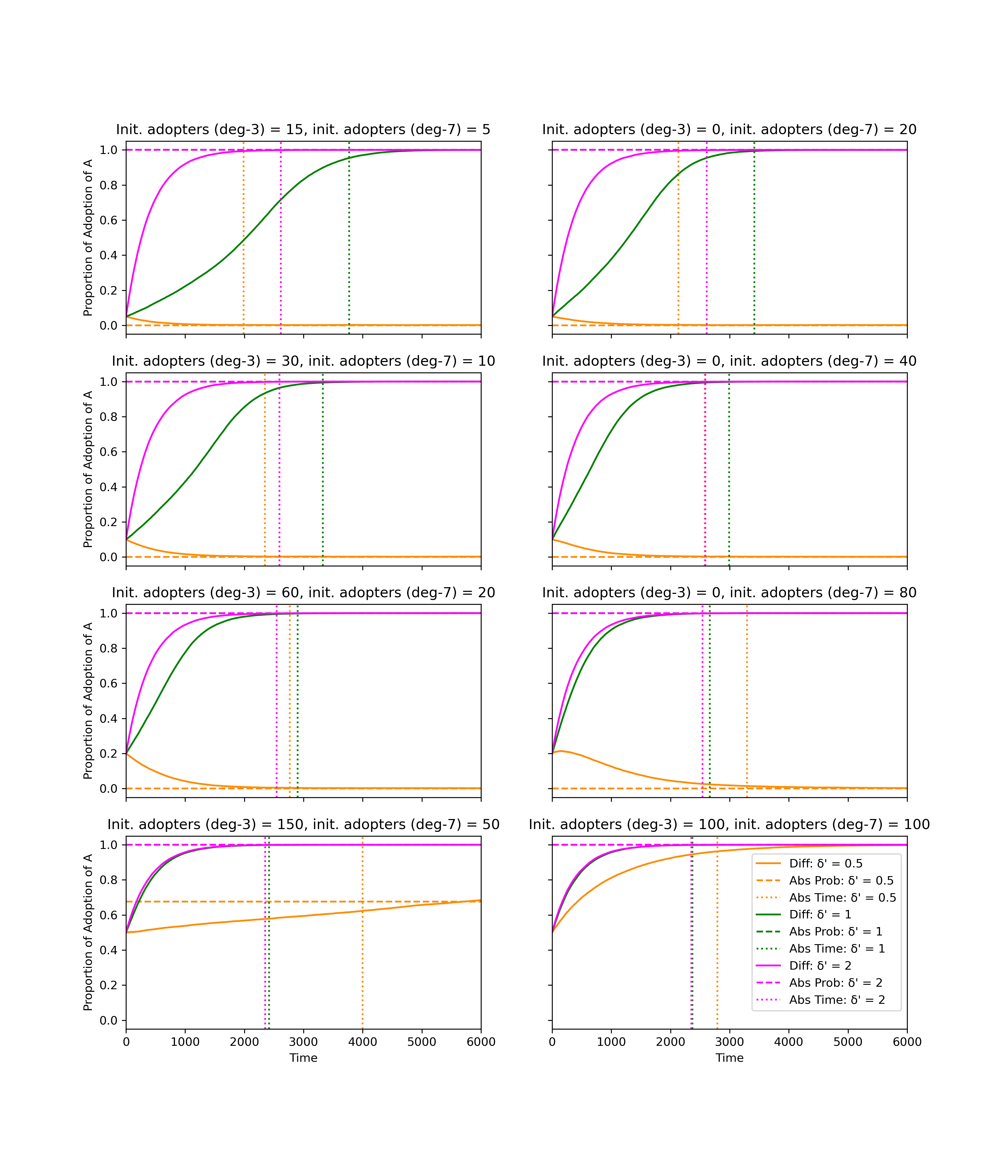}
    \caption{Simulation results of diffusion on random graph with two types of degree (2 and 4) and results from 2d Markov chain analysis with payoff matrix in Table \ref{tab:payoff-sim-1}.}
    \label{fig:sim-random-2d-comp}
\end{figure}

\section{Conclusion}\label{sec:conclusion}

In this paper, we analyzed diffusion of a superior alternative in social networks. Each individual has a trust limit and behaves trustworthy towards its neighbors for long-term benefit. In each iteration, an individual plays a coordination game with its neighbors based on the notion of limited-trust equilibrium with the hope of getting the return from its neighbors in future. 


To study whether the superior alternative prevails or dies out, the diffusion model is transformed into a reduced-size Markov chain based on mean-field approximation. The Markov chain represents a random walk on a lattice graph with heterogeneous transition probabilities and two absorbing states. The probability of domination of the  superior alternative and the expected time to equilibrium are represented by the absorption probability to the superior-alternative-dominating state and the expected time to absorption of the absorbing Markov chain and can be computed by solving a system of linear equations. In particular, for regular graphs, a closed-form expression can be obtained. The best-response LTE diffusion model can be transformed into a linear threshold diffusion model when all players update their state synchronously, and therefore, a connection between game-theoretic and epidemic diffusion models is established. Simulation results show that when players behave trustworthy, their long-term utility is significantly greater than the case when they are solely self-interested. The incentive for long-term utility also drives the superior alternative to spread over the network when their trust limit is large. Further simulation results show that the Markov chain analysis estimates the convergence property of LTE dynamics reasonably well.

The Markov chain analysis is useful from a managerial perspective. On the one hand, it is helpful for the companies to estimate the probability that the product or service dominates the social network or dies out and to estimate the time frame that such dominance or dying-out takes place. It is easy to use the Markov chain analysis to check whether the seed selection strategy can enable the alternative to spread over the social network with high confidence when the company has limited information about the social network structure. On the other hand, given the desired confidence level that the alternative dominates the social network, the Markov chain analysis can be used to design seed selection strategies. The simple form of the Markov chain analysis makes it useful for the companies to carry out a first-step estimation and strategy design.

This work opens several interesting future directions. For instance, how will diffusion progress when only a portion of agents behave trustworthy? And how can one obtain a high-probability bound for the convergence time of LTE diffusion to the equilibrium points?



\ifCLASSOPTIONcaptionsoff
  \newpage
\fi



\bibliographystyle{IEEEtran}
\bibliography{references}

\begin{thebibliography}{10}
\providecommand{\url}[1]{#1}
\csname url@samestyle\endcsname
\providecommand{\newblock}{\relax}
\providecommand{\bibinfo}[2]{#2}
\providecommand{\BIBentrySTDinterwordspacing}{\spaceskip=0pt\relax}
\providecommand{\BIBentryALTinterwordstretchfactor}{4}
\providecommand{\BIBentryALTinterwordspacing}{\spaceskip=\fontdimen2\font plus
\BIBentryALTinterwordstretchfactor\fontdimen3\font minus \fontdimen4\font\relax}
\providecommand{\BIBforeignlanguage}[2]{{%
\expandafter\ifx\csname l@#1\endcsname\relax
\typeout{** WARNING: IEEEtran.bst: No hyphenation pattern has been}%
\typeout{** loaded for the language `#1'. Using the pattern for}%
\typeout{** the default language instead.}%
\else
\language=\csname l@#1\endcsname
\fi
#2}}
\providecommand{\BIBdecl}{\relax}
\BIBdecl

\bibitem{leon2022diffusion}
V.~Leon, S.~Rasoul~Etesami, and R.~Nagi, ``Diffusion of innovation under limited-trust equilibrium,'' in \emph{2022 IEEE 61st Conference on Decision and Control (CDC)}, 2022, pp. 3145--3150.

\bibitem{pare2020modeling}
P.~E. Par{\'e}, C.~L. Beck, and T.~Ba{\c{s}}ar, ``Modeling, estimation, and analysis of epidemics over networks: An overview,'' \emph{Annual Reviews in Control}, vol.~50, pp. 345--360, 2020.

\bibitem{khanafer2014information}
A.~Khanafer and T.~Ba{\c{s}}ar, ``Information spread in networks: Control, games, and equilibria,'' in \emph{2014 Information Theory and Applications Workshop (ITA)}.\hskip 1em plus 0.5em minus 0.4em\relax IEEE, 2014, pp. 1--10.

\bibitem{etesami2021open}
S.~R. Etesami, ``Open-loop equilibrium strategies for dynamic influence maximization game over social networks,'' \emph{IEEE Control Systems Letters}, vol.~6, pp. 1496--1500, 2021.

\bibitem{foster1990stochastic}
D.~Foster and P.~Young, ``Stochastic evolutionary game dynamics,'' \emph{Theoretical Population Biology}, vol.~38, no.~2, pp. 219--232, 1990.

\bibitem{young1993evolution}
H.~P. Young, ``The evolution of conventions,'' \emph{Econometrica}, vol.~61, no.~1, pp. 57--84, 1993.

\bibitem{kandori1993}
M.~Kandori, G.~J. Mailath, and R.~Rob, ``Learning, mutation, and long run equilibria in games,'' \emph{Econometrica}, vol.~61, no.~1, pp. 29--56, 1993.

\bibitem{kempe2003maximizing}
D.~Kempe, J.~Kleinberg, and E.~Tardos, ``Maximizing the spread of influence through a social network,'' in \emph{Proceedings of the Ninth ACM SIGKDD International Conference on Knowledge Discovery and Data Mining}, ser. KDD '03.\hskip 1em plus 0.5em minus 0.4em\relax New York, NY, USA: Association for Computing Machinery, 2003, p. 137–146.

\bibitem{kempe2005influential}
D.~Kempe, J.~Kleinberg, and {\'E}.~Tardos, ``Influential nodes in a diffusion model for social networks,'' in \emph{Automata, Languages and Programming}, L.~Caires, G.~F. Italiano, L.~Monteiro, C.~Palamidessi, and M.~Yung, Eds.\hskip 1em plus 0.5em minus 0.4em\relax Berlin, Heidelberg: Springer Berlin Heidelberg, 2005, pp. 1127--1138.

\bibitem{young2006diffusion}
H.~P. Young, ``The diffusion of innovations in social networks,'' \emph{The economy as an evolving complex system III: Current perspectives and future directions}, vol. 267, p.~39, 2006.

\bibitem{young2011dynamics}
------, ``The dynamics of social innovation,'' \emph{Proceedings of the National Academy of Sciences}, vol. 108, no. Supplement 4, pp. 21\,285--21\,291, 2011.

\bibitem{montanari2010}
A.~Montanari and A.~Saberi, ``The spread of innovations in social networks,'' \emph{Proceedings of the National Academy of Sciences}, vol. 107, no.~47, pp. 20\,196--20\,201, 2010.

\bibitem{kreindler2014}
G.~E. Kreindler and H.~P. Young, ``Rapid innovation diffusion in social networks,'' \emph{Proceedings of the National Academy of Sciences}, vol. 111, no. Supplement 3, pp. 10\,881--10\,888, 2014.

\bibitem{fazli2014nonprogressive}
M.~Fazli, M.~Ghodsi, J.~Habibi, P.~Jalaly, V.~Mirrokni, and S.~Sadeghian, ``On non-progressive spread of influence through social networks,'' \emph{Theoretical Computer Science}, vol. 550, pp. 36--50, 2014.

\bibitem{gursoy2018influence}
F.~Gursoy and D.~Gunnec, ``Influence maximization in social networks under deterministic linear threshold model,'' \emph{Knowledge-Based Systems}, vol. 161, pp. 111--123, 2018.

\bibitem{chan2020influence}
T.~H.~H. Chan, L.~Ning, and Y.~Zhang, ``Influence maximization under the non-progressive linear threshold model,'' in \emph{Frontiers in Algorithmics: 14th International Workshop, FAW 2020, Haikou, China, October 19-21, 2020, Proceedings}.\hskip 1em plus 0.5em minus 0.4em\relax Berlin, Heidelberg: Springer-Verlag, 2020, p. 37–48.

\bibitem{arieli2020speed}
I.~Arieli, Y.~Babichenko, R.~Peretz, and H.~P. Young, ``The speed of innovation diffusion in social networks,'' \emph{Econometrica}, vol.~88, no.~2, pp. 569--594, 2020.

\bibitem{ghayoori2021seed}
A.~Ghayoori and R.~Nagi, ``Seed investment bounds for viral marketing under generalized diffusion and selection guidance,'' \emph{IEEE Transactions on Computational Social Systems}, vol.~8, no.~3, pp. 546--556, 2021.

\bibitem{acemoglu2011diffusion}
D.~Acemoglu, A.~Ozdaglar, and E.~Yildiz, ``Diffusion of innovations in social networks,'' in \emph{2011 50th IEEE Conference on Decision and Control and European Control Conference}.\hskip 1em plus 0.5em minus 0.4em\relax IEEE, 2011, pp. 2329--2334.

\bibitem{etesami2016complexity}
S.~R. Etesami and T.~Ba{\c{s}}ar, ``Complexity of equilibrium in competitive diffusion games on social networks,'' \emph{Automatica}, vol.~68, pp. 100--110, 2016.

\bibitem{murray2021}
T.~Murray, J.~Garg, and R.~Nagi, ``Limited-trust equilibria,'' \emph{European Journal of Operational Research}, vol. 289, no.~1, pp. 364--380, 2021.

\bibitem{mossel2010submodularity}
E.~Mossel and S.~Roch, ``Submodularity of influence in social networks: From local to global,'' \emph{SIAM J. Comput.}, vol.~39, no.~6, p. 2176–2188, mar 2010.

\bibitem{wasserman1994}
S.~Wasserman and K.~Faust, \emph{Social Network Analysis: Methods and Applications}, ser. Structural Analysis in the Social Sciences.\hskip 1em plus 0.5em minus 0.4em\relax Cambridge University Press, 1994.

\bibitem{borgatti2006}
S.~P. Borgatti, ``Identifying sets of key players in a social network,'' \emph{Comput. Math. Organ. Theory}, vol.~12, no.~1, p. 21–34, apr 2006.

\bibitem{hinz2011}
O.~Hinz, B.~Skiera, C.~Barrot, and J.~U. Becker, ``Seeding strategies for viral marketing: An empirical comparison,'' \emph{Journal of Marketing}, vol.~75, no.~6, pp. 55--71, 2011.

\bibitem{samadi2016subjective}
M.~Samadi, A.~Nikolaev, and R.~Nagi, ``A subjective evidence model for influence maximization in social networks,'' \emph{Omega}, vol.~59, pp. 263--278, 2016.

\bibitem{samadi2017temporal}
------, ``The temporal aspects of the evidence-based influence maximization on social networks,'' \emph{Optimization Methods and Software}, vol.~32, no.~2, pp. 290--311, 2017.

\bibitem{chierichetti2014}
F.~Chierichetti, J.~Kleinberg, and A.~Panconesi, ``How to schedule a cascade in an arbitrary graph,'' \emph{SIAM Journal on Computing}, vol.~43, no.~6, pp. 1906--1920, 2014.

\bibitem{samadi2018seed}
M.~Samadi, R.~Nagi, A.~Semenov, and A.~Nikolaev, ``Seed activation scheduling for influence maximization in social networks,'' \emph{Omega}, vol.~77, pp. 96--114, 2018.

\bibitem{harsanyi1988}
J.~C. Harsanyi and R.~Selten, \emph{A general theory of equilibrium selection in games}.\hskip 1em plus 0.5em minus 0.4em\relax MIT Press, Cambridge, MA, 1988, with a foreword by Robert Aumann.

\bibitem{kreps1982reputation}
D.~M. Kreps and R.~Wilson, ``Reputation and imperfect information,'' \emph{Journal of Economic Theory}, vol.~27, no.~2, pp. 253--279, 1982.

\bibitem{kreps1982rational}
D.~M. Kreps, P.~Milgrom, J.~Roberts, and R.~Wilson, ``Rational cooperation in the finitely repeated prisoners' dilemma,'' \emph{Journal of Economic Theory}, vol.~27, no.~2, pp. 245--252, 1982.

\bibitem{basar1998}
T.~Başar and G.~J. Olsder, \emph{Dynamic Noncooperative Game Theory}.\hskip 1em plus 0.5em minus 0.4em\relax Philadelphia, PA, USA: Society for Industrial and Applied Mathematics, 1998.

\bibitem{basar2018handbook}
T.~Basar and G.~Zaccour, \emph{Handbook of dynamic game theory}.\hskip 1em plus 0.5em minus 0.4em\relax Springer, 2018.

\bibitem{chung2016game}
K.~Chung, C.~A. Kamhoua, K.~A. Kwiat, Z.~T. Kalbarczyk, and R.~K. Iyer, ``Game theory with learning for cyber security monitoring,'' in \emph{2016 IEEE 17th International Symposium on High Assurance Systems Engineering (HASE)}, 2016, pp. 1--8.

\bibitem{leon2021bandit}
V.~Leon and S.~R. Etesami, ``Bandit learning for dynamic {C}olonel {B}lotto game with a budget constraint,'' in \emph{2021 60th IEEE Conference on Decision and Control (CDC)}, 2021, pp. 3818--3823.

\bibitem{nagel1995unraveling}
\BIBentryALTinterwordspacing
R.~Nagel, ``Unraveling in guessing games: An experimental study,'' \emph{The American Economic Review}, vol.~85, no.~5, pp. 1313--1326, 1995. [Online]. Available: \url{http://www.jstor.org/stable/2950991}
\BIBentrySTDinterwordspacing

\bibitem{stahl1994experimental}
D.~O. Stahl and P.~W. Wilson, ``Experimental evidence on players' models of other players,'' \emph{Journal of Economic Behavior and Organization}, vol.~25, no.~3, pp. 309--327, 1994.

\bibitem{stahl1995players}
------, ``On players' models of other players: Theory and experimental evidence,'' \emph{Games and Economic Behavior}, vol.~10, no.~1, pp. 218--254, 1995.

\bibitem{costa-gomes2001cognition}
M.~Costa-Gomes, V.~P. Crawford, and B.~Broseta, ``Cognition and behavior in normal-form games: An experimental study,'' \emph{Econometrica}, vol.~69, no.~5, pp. 1193--1235, 2001.

\bibitem{costa-gomes2006cognition}
M.~A. Costa-Gomes and V.~P. Crawford, ``Cognition and behavior in two-person guessing games: An experimental study,'' \emph{American Economic Review}, vol.~96, no.~5, pp. 1737--1768, December 2006.

\bibitem{camerer2004cognitive}
C.~F. Camerer, T.-H. Ho, and J.-K. Chong, ``A cognitive hierarchy model of games,'' \emph{The Quarterly Journal of Economics}, vol. 119, no.~3, pp. 861--898, 2004.

\bibitem{chong2016generalized}
J.-K. Chong, T.-H. Ho, and C.~Camerer, ``A generalized cognitive hierarchy model of games,'' \emph{Games and Economic Behavior}, vol.~99, pp. 257--274, 2016.

\bibitem{mckelvey1995quantal}
R.~D. McKelvey and T.~R. Palfrey, ``Quantal response equilibria for normal form games,'' \emph{Games and Economic Behavior}, vol.~10, no.~1, pp. 6--38, 1995.

\bibitem{eyster2005cursed}
E.~Eyster and M.~Rabin, ``Cursed equilibrium,'' \emph{Econometrica}, vol.~73, no.~5, pp. 1623--1672, 2005.

\bibitem{palfrey1996altruism}
T.~R. Palfrey and J.~E. Prisbrey, ``Altruism, reputation and noise in linear public goods experiments,'' \emph{Journal of Public Economics}, vol.~61, no.~3, pp. 409--427, 1996.

\bibitem{carrillo2009compromise}
J.~D. Carrillo and T.~R. Palfrey, ``The compromise game: Two-sided adverse selection in the laboratory,'' \emph{American Economic Journal: Microeconomics}, vol.~1, no.~1, pp. 151--81, February 2009.

\bibitem{ali2021adverse}
S.~N. Ali, M.~Mihm, L.~Siga, and C.~Tergiman, ``Adverse and advantageous selection in the laboratory,'' \emph{American Economic Review}, vol. 111, no.~7, pp. 2152--78, July 2021.

\bibitem{berg1995trust}
J.~Berg, J.~Dickhaut, and K.~McCabe, ``Trust, reciprocity, and social history,'' \emph{Games and Economic Behavior}, vol.~10, no.~1, pp. 122--142, 1995.

\bibitem{englewarnick2006learning}
J.~Engle-Warnick and R.~L. Slonim, ``Learning to trust in indefinitely repeated games,'' \emph{Games and Economic Behavior}, vol.~54, no.~1, pp. 95--114, 2006.

\bibitem{ledyard1995public}
J.~O. Ledyard, \emph{Public Goods: A Survey of Experimental Research}.\hskip 1em plus 0.5em minus 0.4em\relax Princeton University Press, 1995, pp. 111--194.

\bibitem{murray2022network}
T.~Murray, J.~Garg, and R.~Nagi, ``Trust in social network games,'' \emph{arXiv:2103.01460}, 2022.

\bibitem{leon2022limitedtrust-arxiv}
V.~Leon, S.~R. Etesami, and R.~Nagi, ``Limited-trust in diffusion of competing alternatives over social networks,'' \emph{arXiv:2206.06318}, 2022.

\bibitem{benaim2003deterministic}
M.~Benaïm and J.~W. Weibull, ``Deterministic approximation of stochastic evolution in games,'' \emph{Econometrica}, vol.~71, no.~3, pp. 873--903, 2003.

\bibitem{lopez2006contagion}
D.~L\'{o}pez-Pintado, ``Contagion and coordination in random networks,'' \emph{Int. J. Game Theory}, vol.~34, no.~3, p. 371–381, 2006.

\bibitem{snapnets}
J.~Leskovec and A.~Krevl, ``{SNAP Datasets}: {Stanford} large network dataset collection,'' \url{http://snap.stanford.edu/data}, Jun. 2014.

\end{thebibliography}

\newpage

\begin{strip}
\centering
\Huge{
Limited-Trust in Diffusion of Competing Alternatives over Social Networks \\
E-Companion
}
\\
\vspace{.5em}
\large{
Vincent Leon, S. Rasoul Etesami, and Rakesh Nagi
}
\end{strip}

\appendices

\section{Numerical Examples} \label{ap:num-examples}


\subsection{Numerical Examples for LTE Diffusion Model}

Consider an arbitrary player $i\in[n]$. Let $k$ be the number of neighbors of $i$, and $k_A$ be the number of neighbors of $i$ in state $A$. Using the payoff matrix in Table \ref{tab:coord-game-payoff-matrix}, we can rewrite Eqs. \eqref{eq:nash-diffusion-logit-prob}, \eqref{eq:lte-diffusion-2a-prob} and \eqref{eq:lte-diffusion-2b-prob} as follows: 

\noindent
\underline{From Eqs. \eqref{eq:nash-diffusion-logit-prob} and \eqref{eq:lte-diffusion-2b-prob}:}
\begin{equation} \label{eq:nash-diffusion-logit-prob-payoff}
    P(x_i = A | \bm{x_{-i}}) = \frac{e^{\beta [k_A(a-d)+(k-k_A)(c-b)]}}{e^{\beta [k_A(a-d)+(k-k_A)(c-b)]}+1}.
\end{equation}

\noindent
\underline{From Eq. \eqref{eq:lte-diffusion-2a-prob}:}
\begin{equation} \label{eq:lte-diffusion-2a-prob-payoff}
    P(x_i = A | \bm{x}_{-i}) = \frac{e^{\beta' [k_A(2a-c-d)+(k-k_A)(c+d-2b)]}}{e^{\beta' [k_A(2a-c-d)+(k-k_A)(c+d-2b)]}+1}.
\end{equation}
As shown in \eqref{eq:nash-diffusion-logit-prob-payoff} and \eqref{eq:lte-diffusion-2a-prob-payoff}, the structure of payoff matrix plays a significant role in the diffusion process. In this appendix, we consider a few special cases of the payoff matrix. We assume that the player has the same sensitivity to utility and social welfare, i.e., $\beta = \beta'$.

{\bf $\diamond$ Case 1:}
Consider a player with $k$ neighbors in a network. Suppose that initially the player and its neighbors all have product $B$. If the player deviates from its neighbors and purchases $A$, it will suffer a loss of $k(b-c)$.
Under the NE diffusion model, the probability that the player changes its state from $B$ to $A$ is small. However, if by doing so its neighbors' utility increases, under the LTE diffusion model, the player will be incentivized to purchase $A$ provided that the decrease in utility is within its trust level. If the decrease in utility is beyond its trust limit, the player will play according to NE, and the state-updating probability is the same as the NE diffusion model. Therefore, in this case, the probability that the player changes its state to $A$ under the LTE diffusion model is at least as large as that under the NE diffusion model. Such payoff matrix has a structure that $c+d>2b$, resulting in a property that for any player in the network, choosing $A$ always yields higher social welfare regardless of the number and the states of its neighbors. This is because, along any edge connecting two interacting players, either player adopting $A$ yields a higher sum of utilities regardless of the other player's state. The payoff matrix with $c+d>2b$ is called payoff matrix \emph{without social welfare gap} as any player's deviation from $B$ to $A$ results in an increase in social welfare regardless of the others' states. An example of the payoff matrix without social welfare gap is shown in Table \ref{tab:payoff-no-sw-gap}.

\begin{table}[h]
    \captionsetup{font=small}
    \caption{Examples of payoff matrices without and with social welfare gap.}
    \begin{subtable}[h]{0.225\textwidth}
    \caption{Payoff matrix without social welfare gap.}
    \centering
    \begin{tabular}{|c||c|c|}
        \hline
          & $A$ & $B$  \\ \hline \hline
        $A$ & 8, 8 & 2, 7 \\ \hline
        $B$ & 7, 2 & 4, 4 \\ \hline
    \end{tabular}
    \label{tab:payoff-no-sw-gap}
    \end{subtable}
    \hfill
    \begin{subtable}[h]{0.225\textwidth}
    \caption{Payoff matrix with social welfare gap.}
    \centering
    \begin{tabular}{|c||c|c|}
        \hline
          & $A$ & $B$  \\ \hline \hline
        $A$ & 8, 8 & 2, 5 \\ \hline
        $B$ & 5, 2 & 4, 4 \\ \hline
    \end{tabular}
    \label{tab:payoff-sw-gap}
    \end{subtable}
    \label{tab:payoff-examples}
\end{table}

When there is no social welfare gap in the payoff matrix, given any state of the system, the probability that a player chooses product $A$ under the LTE diffusion model is at least as large as that under the NE diffusion model. A brief analysis is provided in the following, where subscripts NE and LTE are used to distinguish the state-updating probability under the NE and LTE diffusion models, respectively. 
\begin{enumerate}
    \item If product $A$ maximizes both social welfare and utility:
    \begin{equation*}
        1 = P_{\text{LTE}}(x_i = A | \bm{x}_{-i}) > P_{\text{NE}}(x_i = A | \bm{x}_{-i})
    \end{equation*}
    
    \item If $A$ maximizes social welfare and $B$ maximizes utility:
    \begin{enumerate}
        \item \(u_i(A, \bm{x}_{-i}) \geq u_i(B, \bm{x}_{-i}) - \delta_i\): 
        \begin{align*}
             \!\!\!\!\!\!P_{\text{LTE}}(x_i = A | \bm{x}_{-i})& = \frac{e^{\beta' [k_A(2a-c-d)+(k-k_A)(c+d-2b)]}}{e^{\beta [k_A(2a-c-d)+(k-k_A)(c+d-2b)]}\!+\!1} \\
            & > \frac{e^{\beta [k_A(a-d)+(k-k_A)(c-b)]}}{e^{\beta [k_A(a-d)+(k-k_A)(c-b)]}+1} \\
            & = P_{\text{NE}}(x_i = A | \bm{x}_{-i}),
        \end{align*}
        where the inequality holds because $\beta=\beta'$, $a>d$, $b>c$ and $c+d>2b$.
        
        \item \(u_i(A, \bm{x}_{-i}) < u_i(B, \bm{x}_{-i}) - \delta_i\):
        \begin{equation*}
            P_{\text{LTE}}(x_i = A | \bm{x}_{-i}) = P_{\text{NE}}(x_i = A | \bm{x}_{-i}). 
        \end{equation*}
    \end{enumerate}
\end{enumerate}

Table \ref{tab:comp-result-no-sw-gap} summarizes the computation results of $P(x_i = A | \bm{x}_{-i})$ for a player with 4 neighbors, $k=4$, $k_A\in \{0,1,2,3,4\}$, $\beta=1$ and various values of $\delta_i$ using the payoff matrix in Table \ref{tab:payoff-no-sw-gap} under NE and LTE diffusion models. Under the NE diffusion model, the player will select $A$ with a large probability only if three or more neighbors are in $A$ because the utility of $A$ surpasses the utility of $B$ when $k_A=3$. By contrast, under the LTE model, with a large trust limit, the player will select $A$ with a high probability when only two neighbors are in $A$. By trusting its neighbors and choosing a product that benefits them, the player will benefit in the same manner when it is its neighbor's turn to update the state. The superior alternative spreads faster under the LTE diffusion model than under the NE model, as the long-term return provides an additional incentive.

\begin{table}[h]
    \captionsetup{font=small}
    \caption{Examples of $P(x_i = A | \bm{x}_{-i})$ under NE and LTE models with payoff matrices in Table \ref{tab:payoff-examples}.}
    \begin{subtable}[h]{.225\textwidth}
    \caption{Computation results of $P(x_i = A | \bm{x}_{-i})$ with payoff matrix in Table \ref{tab:payoff-no-sw-gap} under NE and LTE models. $k=4$, $\beta=\beta'=1$.}
    \centering
    \begin{tabular}{|c|c|c|c|}
    \hline
    \multirow{3}{*}{$k_A$} & \multicolumn{3}{c|}{$P(x_i = A | \bm{x}_{-i})$} \\ \cline{2-4}
    & \multirow{2}{*}{NE} & \multicolumn{2}{c|}{LTE} \\ \cline{3-4}
    &  & $\delta_i=1$ & $\delta_i=4$ \\ \hline
    0 & .00034 & .00034 & .00034 \\ \hline
    1 & .00669 & .00669 & .00669 \\ \hline
    2 & .11920 & .11920 & $\approx$ 1 \\ \hline
    3 & .73106 & 1 & 1 \\ \hline
    4 & .98201 & 1 & 1 \\ \hline
    \end{tabular}
    \label{tab:comp-result-no-sw-gap}
    \end{subtable}
    \hspace{+0.2cm}
    \begin{subtable}[h]{.225\textwidth}
    \caption{Computation results of $P(x_i = A | \bm{x}_{-i})$ with payoff matrix in Table \ref{tab:payoff-sw-gap} under NE and LTE models. $k=4$, $\beta=1$.}
    \centering
    \begin{tabular}{|c|c|c|c|}
    \hline
    \multirow{3}{*}{$k_A$} & \multicolumn{3}{c|}{$P(x_i = A | \bm{x}_{-i})$} \\ \cline{2-4}
    & \multirow{2}{*}{NE} &  \multicolumn{2}{c|}{LTE} \\\cline{3-4}
    &  & $\delta_i=1$ & $\delta_i=4$ \\ \hline
    0 & .00034 & 0 & 0 \\ \hline
    1 & .04743 & .04743 & $\approx$ 1 \\ \hline
    2 & .88080 & 1 & 1 \\ \hline
    3 & .99909 & 1 & 1 \\ \hline
    4 & .99999 & 1 & 1 \\ \hline
    \end{tabular}
    \label{tab:comp-result-sw-gap}
    \end{subtable}
\end{table}

From the efficiency perspective, under certain circumstances, the LTE model can push the diffusion process to converge to the social welfare-maximizing equilibrium. Since the NE diffusion process only converges to the risk-dominant equilibrium \cite{harsanyi1988,kandori1993,young2006diffusion}, this equilibrium is not necessarily the most efficient equilibrium that maximizes every player's utility. Risk-dominance depends on the relative magnitude of $a-d$ and $b-c$ rather than $a$ and $b$.  However, an LTE diffusion process can converge to the most efficient equilibrium different from the NE diffusion process if the risk-dominant equilibrium and the social-welfare-maximizing equilibrium differ. Simulation results in Section \ref{sec:simulation} show that when players have moderate to large values of trust limit, the diffusion process under LTE dynamics can converge to the equilibrium where the superior alternative dominates the social network rather than the risk-averse alternative. 

{\bf $\diamond$ Case 2:} Contrary to the payoff matrices without social welfare gap, the payoff matrices in which $c+d \leq 2b$ are termed payoff matrices \emph{with social welfare gap}. An example of such payoff matrices is shown in Table \ref{tab:payoff-sw-gap}. For such payoff matrices, it is no longer guaranteed that the probability that an agent chooses $A$ under the LTE diffusion model is greater than or equal to that under the NE. Under certain circumstances, when there are few players who adopt the superior alternative at the beginning of the diffusion, the superior alternative may rapidly die out in the system because purchasing $A$ results in a decrease of utility to the player as well as its neighbors. The player will not adopt $A$ under such circumstances, as per Case 1 in the LTE diffusion model. Therefore, a sufficient number of initial adopters must be ensured in the system. These initial users are usually the targets under the ``viral marketing'' strategy who are offered free ``samples'' or ``incentives'' to promote the superior alternative by utilizing their social influence \cite{ghayoori2021seed}.

Table \ref{tab:comp-result-sw-gap} summarizes the computation results of $P(x_i = A | \bm{x}_{-i})$ for a player with 4 neighbors, $k=4$, various values of $k_A$ and $\delta_i$, and $\beta=1$ using the payoff matrix in Table \ref{tab:payoff-sw-gap} under NE and LTE diffusion models. Consider a small star network with one central player connected with four leaves. Firstly, the results show that the center player will adopt $A$ with positive probability only if at least one of its neighbors has done so. As an example, suppose that none of its neighbors is in state $A$. Then, product $A$ may spread in the social network only if the center player is selected as the ``seed'' at the beginning of the process. A further calculation for the leaves shows that the system will converge to the all-$A$ state (the state where all players adopt the superior alternative) with positive probability only if at least one player (either center or leaf) is selected as the seed. Without initial adopters, the system will get stuck in the all-$B$ equilibrium state, and product $A$ will never diffuse. Secondly, the center player will adopt $A$ with a probability close to or equal to 1 once one neighbor is in state $A$ if it has a large trust limit and will choose $A$ with a probability 1 once two neighbors are in state $A$ if it behaves trustworthy. This implies that once a sufficient number of individuals have adopted the superior alternative, it will spread much faster under the LTE dynamics than under the NE dynamics because, on top of the individual's instantaneous benefit, the long-term benefit and potential return in the future provide an additional incentive driving the non-adopters to adopt the superior alternative.

\subsection{Numerical Examples for Linear Threshold LTE Diffusion Model}

Consider a player with four neighbors in a star network. Table \ref{tab:comp-result-q-star-1} shows the computation results of $q_w$, $q_u$, and $q^*$ under various values of $\delta'$ using payoff matrix in Table \ref{tab:payoff-no-sw-gap}. Since the payoff matrix is without social welfare gap, $2b-c-d<0$ and $q_w = 0$. $q^*$ decreases from 0.667 to 0 as $\delta'$ increases from 0 to 2; as $\delta'$ further increases, $q^*$ remains 0. Note that the cases of $\delta'=0.25$ and $\delta'=1$ in Table \ref{tab:comp-result-q-star-1} correspond to the cases of $\delta=1$ and $\delta=4$ in Table \ref{tab:comp-result-no-sw-gap}, respectively. The results of Table \ref{tab:comp-result-no-sw-gap} and \ref{tab:comp-result-q-star-1} are consistent. For example, for $\delta' = 1$ ($\delta=4$), when at least two neighbors are in $A$, the player will update its state to $A$. This is in line with the result in Table \ref{tab:payoff-no-sw-gap} that $P(x_i = A \vert \bm{x}_{-i})$ is one or very close to one under the LTE diffusion model when $k_A = 2, 3, 4$. Table \ref{tab:comp-result-q-star-2} shows the computation results using the payoff matrix in Table \ref{tab:payoff-sw-gap}. Since this is a payoff matrix with social welfare gap, $q_w > 0$. $q_w = 0.1$ indicates that only when at least 10\% of the neighbors are in state $A$, choosing $A$ bring about higher social welfare than $B$. As $\delta'$ increases from 0 to 1.5, $q^*$ decreases from 0.4 to 0.1; it does not change with further increase of $\delta'$. The results are again consistent with Table \ref{tab:comp-result-sw-gap}. When $\delta'=0.25$ ($\delta=1$), the player adopts $A$ when at least two neighbors are in $A$ according to the LTE threshold model. In Table \ref{tab:comp-result-sw-gap}, when $k_A = 1$, the player plays according to NE and chooses $A$ with a small probability; when $k_A = 2$, this probability becomes 1.

\begin{table}[h]
    \captionsetup{font=small}
    \caption{Examples of $q^*$ for different payoff matrices.}
    \begin{subtable}[h]{.45\textwidth}
    \caption{Computation results of $q_w$, $q_u$, and $q^*$ for various values of $\delta'$ using payoff matrix in Table \ref{tab:payoff-no-sw-gap}.}
    \centering
    \begin{tabular}{|c|c|c|c|c|c|c|}
    \hline
    $q_w$ & $q_u$ &
    \multicolumn{5}{c|}{$q^*$} \\ \hline
    \multirow{2}{*}{0} & 
    \multirow{2}{*}{0.667} & 
    $\delta'=.25$ & $\delta'=.5$ & $\delta'=1$ & $\delta'=2$ & $\delta'=3$ \\ \cline{3-7}
    & & 0.583 & 0.5 & 0.333 & 0 & 0 \\
    \hline
    \end{tabular}
    \vspace{1em}
    \label{tab:comp-result-q-star-1}
    \end{subtable}
    \begin{subtable}[h]{.45\textwidth}
    \caption{Computation results of $q_w$, $q_u$, and $q^*$ for various values of $\delta'$ using payoff matrix in Table \ref{tab:payoff-no-sw-gap}.}
    \centering
    \begin{tabular}{|c|c|c|c|c|c|c|}
    \hline
    $q_w$ & $q_u$ &
    \multicolumn{5}{c|}{$q^*$} \\ \hline
    \multirow{2}{*}{0.1} & 
    \multirow{2}{*}{0.4} & 
    $\delta'=.25$ & $\delta'=.5$ & $\delta'=1$ & $\delta'=2$ & $\delta'=3$ \\ 
    \cline{3-7}
    & & 0.35 & 0.3 & 0.2 & 0.1 & 0.1 \\ \hline
    \end{tabular}
    \label{tab:comp-result-q-star-2}
    \end{subtable}
\end{table}

We have seen that the range of $q^*$ is between $q_w$ and $q_u$, which is determined by the payoff matrix. For payoff matrices without social welfare gap, $q_w=0$ and $q^*$ can be as small as zero. 
One interesting question is how large $q^*$ can be. For certain ``bad'' payoff matrices, the threshold $q^*$ can be very large. This implies that the player will adopt the superior alternative only if almost all its neighbors have adopted it. Table \ref{tab:payoff-peculiar} shows a peculiar payoff matrix, and Table \ref{tab:comp-result-q-star-peculiar} shows the computation result of $q^*$ under that payoff matrix. When $\delta'$ is small, $q^*$ is large and very close to $q_u$, which is close to 1. In this payoff matrix, $a$ is close to $d$ whereas $b$ is much greater than $c$. This implies that when a few of its neighbors are in state $A$, switching from $B$ to $A$ will cause a significant loss to the player; when the majority of its neighbors are in $A$, changing its state from $B$ to $A$ will not bring much benefit to its utility. Therefore, if the player is more self-interested and has a small value of $\delta'$, the NE dynamics will prevail, and the player will have very little incentive to adopt the superior alternative.

\begin{table}[t]
    \captionsetup{font=small}
    \caption{Examples of $q^*$ for a peculiar payoff matrix.}
    \begin{subtable}[h]{.45\textwidth}
    \caption{A peculiar payoff matrix with large value of $q^*$.}
    \centering
    \begin{tabular}{|c||c|c|}
        \hline
          & $A$ & $B$  \\ \hline \hline
        $A$ & 10, 10 & 1, 9 \\ \hline
        $B$ & 9, 1 & 9, 9 \\ \hline
    \end{tabular}
    \vspace{1em}
    \label{tab:payoff-peculiar}
    \end{subtable}
    \begin{subtable}[h]{.45\textwidth}
    \caption{Computation results of $q_w$, $q_u$, and $q^*$ for various values of $\delta'$ using payoff matrix in Table \ref{tab:payoff-peculiar}}
    \centering
    \begin{tabular}{|c|c|c|c|c|c|}
    \hline
    $q_w$ & $q_u$ &
    \multicolumn{4}{c|}{$q^*$} \\
    \hline
    \multirow{4}{*}{0.444} & 
    \multirow{4}{*}{0.888} & 
    $\delta'=.25$ & $\delta'=.5$ & $\delta'=1$ & $\delta'=2$ \\ 
    \cline{3-6}
    & & 0.861 & 0.833 & 0.778 & 0.667 \\
    \cline{3-6}
    & & $\delta'=3$ & $\delta'=4$ & $\delta'=5$ & $\delta'=6$ \\ \cline{3-6}
    & & 0.556 & 0.444 & 0.444 & 0.444 \\ \hline
    \end{tabular}
    \label{tab:comp-result-q-star-peculiar}
    \end{subtable}
\end{table}

\section{Proof of Theorem \ref{thm:absorbing}} \label{ap:proof-thm-absorbing}

The Markov chain with transition probability matrix $\bm{P}$ characterized by Eq. \eqref{eq:mc-trans-prob-mat} has 2 absorbing states and $(n-1)$ transient states. 
Write $\bm{P}$ in canonical form and partition it into four submatrices: 
\(
    \bm{P} = 
    \begin{bmatrix}
    \bm{T} & \bm{R} \\
    \bm{0} & \bm{I}
    \end{bmatrix},
\)
where $\bm{T}$ and $\bm{R}$ represent the transition among transients states and from transient states to absorbing states, respectively. 
To prove Theorem \ref{thm:absorbing}, it is sufficient to characterize $(\bm{I}-\bm{T})^{-1}$.

\begin{lemma}
The determinant of $\bm{I}-\bm{T}$ is $\Delta$.
\end{lemma}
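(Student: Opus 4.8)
The plan is to recognize $\bm{I}-\bm{T}$ as a tridiagonal matrix and to evaluate its determinant by the standard leading‑principal‑minor recurrence. Indexing the $(n-1)$ transient states by $\{1,\dots,n-1\}$, the matrix $\bm{I}-\bm{T}$ has $(i,i)$ entry $a_i+b_i$, $(i,i+1)$ entry $-a_i$, and $(i+1,i)$ entry $-b_{i+1}$, with all other entries zero. For $0\le m\le n-1$ let $D_m$ denote the determinant of the leading $m\times m$ principal submatrix, so that $D_0=1$, $D_1=a_1+b_1$, and $D_{n-1}=\det(\bm{I}-\bm{T})$.

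Expanding along the last row gives the three‑term recurrence
\begin{equation*}
    D_m = (a_m+b_m)\,D_{m-1} - a_{m-1}b_m\,D_{m-2},\qquad m\ge 2,
\end{equation*}
where the cross coefficient $a_{m-1}b_m$ is the product $(-a_{m-1})(-b_m)$ of the two off‑diagonal entries at positions $(m-1,m)$ and $(m,m-1)$. I would then conjecture, and prove by induction on $m$, the closed form
\begin{equation*}
    D_m = \sum_{i=1}^{m+1}\ \prod_{j=i}^{m} a_j \ \prod_{j=1}^{i-1} b_j ,
\end{equation*}
with empty products read as $1$; specializing to $m=n-1$ gives precisely $D_{n-1}=\Delta$, which is the claim.

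For the induction the base cases $m=0,1$ are immediate. In the inductive step, substituting the formulas for $D_{m-1}$ and $D_{m-2}$ into the recurrence, the term $a_m D_{m-1}$ contributes exactly the summands with $i=1,\dots,m$ of the asserted expression for $D_m$; the remaining part telescopes, since after absorbing $a_{m-1}$ into the $a$‑product one finds that the sums $\sum_{i=1}^{m-1}$ in $b_m D_{m-1}$ and in $a_{m-1}b_m D_{m-2}$ agree term‑by‑term and cancel, leaving only $b_m\prod_{j=1}^{m-1}b_j=\prod_{j=1}^{m}b_j$, which is the missing $i=m+1$ summand. Adding the two contributions closes the induction, and then taking $m=n-1$ finishes the proof.

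I do not expect a substantive obstacle here; it is a bookkeeping induction. The one point requiring care is the indexing of the off‑diagonal entries: the subdiagonal entry in row $i+1$ is $-b_{i+1}$, not $-b_i$, so the cross term in the minor recurrence is $a_{m-1}b_m$, and getting this right is exactly what makes the telescoping work. It is also worth noting, to avoid confusion, that the boundary conventions $a_0=b_0=a_n=b_n=0$ are irrelevant to this lemma, since only $a_1,\dots,a_{n-1}$ and $b_1,\dots,b_{n-1}$ appear in $\bm{T}$, in $\Delta$, and throughout the induction.
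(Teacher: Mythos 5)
Your proof is correct and follows essentially the same route as the paper's: both identify $\bm{I}-\bm{T}$ as tridiagonal, apply the three-term continuant recurrence $D_m=(a_m+b_m)D_{m-1}-a_{m-1}b_mD_{m-2}$, and close the induction by the same telescoping that leaves the single term $\prod_{j=1}^{m}b_j$. Your indexing of the off-diagonal entries and the resulting cross coefficient $a_{m-1}b_m$ matches the paper's computation exactly.
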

\begin{proof}
We use superscript to denote the number of players in the social network represented by the 1-dimensional Markov chain. $\bm{P}^{(n)}\in \mb{R}^{(n+1)\times (n+1)}$ denotes the transition probability matrix for a 1-dimensional Markov chain with $n$ players in the social network. We prove this by induction. For the base step, we have $\text{det}[(\bm{I}-\bm{T})^{(2)}] = a_1 + b_1$, and $\text{det}[(\bm{I}-\bm{T})^{(3)}] = (a_1+b_1)(a_2+b_2)-a_1b_2 = a_1 a_2 + a_2 b_1 + b_1 b_2$. Now, for the induction step, we can write 
\begin{align*}
    & \text{det}[(\bm{I}-\bm{T})^{(n+1)}] = 
    \begin{vmatrix}
    (\bm{I}-\bm{T})^{(n)} & 
    \begin{matrix}
    \bm{0} \\ -a_{n-1}
    \end{matrix} \\
    \begin{matrix}
    \bm{0} & -b_n
    \end{matrix} & a_n + b_n
    \end{vmatrix} \\
    & = (a_n + b_n)\text{det}[(\bm{I}-\bm{T})^{(n)}] - a_{n-1} b_n \text{det}[(\bm{I}-\bm{T})^{(n-1)}] \\
    & = \sum_{i=1}^n \prod_{j=i}^n a_j \prod_{j=1}^{i-1}b_j + b_n \sum_{i=1}^n \prod_{j=i}^{n-1} a_j \prod_{j=1}^{i-1} b_j \\
    & \quad -  b_n \sum_{i=1}^{n-1} \prod_{j=i}^{n-1} a_j \prod_{j=1}^{i-1} b_j \\
    & = \sum_{i=1}^n \prod_{j=i}^n a_j \prod_{j=1}^{i-1}b_j + \prod_{j=1}^n b_j \\
    & = \sum_{i=1}^{n+1} \prod_{j=i}^n a_j \prod_{j=1}^{i-1}b_j.
\end{align*}
\end{proof}

\begin{lemma}
The $(i,j)$-th entry of $(\bm{I}-\bm{T})^{-1}$ equals
\begin{align}
    (I-T)^{-1}_{i,j} & = 
    \frac{1}{\Delta} \left( \sum_{k=1}^{\min\{i,j\}} \prod_{l=k}^{j-1} a_l \prod_{l=1}^{k-1} b_l \right) \cdot \notag \\
    & \quad\qquad \left( \sum_{k=\max\{i,j\}+1}^n \prod_{l=k}^{n-1} a_l \prod_{l=j+1}^{k-1} b_l \right). \label{eq:lemma-mat-inv}
\end{align}
\end{lemma}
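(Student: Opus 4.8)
The plan is to apply the cofactor (adjugate) formula for the inverse, invoking the determinant evaluation of the preceding lemma twice: once for $\det(\bm{I}-\bm{T})$ itself, and once for the leading and trailing principal minors of $\bm{I}-\bm{T}$ that will appear in the answer. Let $\bm{M}=\bm{I}-\bm{T}$ denote the $(n-1)\times(n-1)$ tridiagonal matrix with $M_{l,l}=a_l+b_l$, $M_{l,l+1}=-a_l$, $M_{l+1,l}=-b_{l+1}$. For $0\le r\le n-1$ write $\theta_r$ for the determinant of the leading $r\times r$ principal submatrix of $\bm{M}$ (with $\theta_0=1$), and for $1\le s\le n$ write $\phi_s$ for the determinant of the trailing submatrix on rows and columns $s,\cdots,n-1$ (with $\phi_n=1$). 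Each of these submatrices is itself the $(\bm{I}-\bm{T})$-matrix of a shorter birth--death chain (for $\phi_s$ after the index shift $l\mapsto l-s+1$), so the preceding lemma gives at once
\[
    \theta_r=\sum_{k=1}^{r+1}\prod_{l=k}^{r}a_l\prod_{l=1}^{k-1}b_l,\qquad
    \phi_s=\sum_{k=s}^{n}\prod_{l=k}^{n-1}a_l\prod_{l=s}^{k-1}b_l,
\]
and in particular $\theta_{n-1}=\Delta$.

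The next step is to evaluate, for $i\le j$, the minor of $\bm{M}$ obtained by deleting row $j$ and column $i$. Grouping the surviving rows as $\{1,\cdots,i-1\}\cup\{i,\cdots,j-1\}\cup\{j+1,\cdots,n-1\}$ and the surviving columns as $\{1,\cdots,i-1\}\cup\{i+1,\cdots,j\}\cup\{j+1,\cdots,n-1\}$, tridiagonality forces every block strictly above the corresponding block diagonal to vanish, so this matrix is block lower triangular. Its three diagonal blocks are the leading submatrix on $\{1,\cdots,i-1\}$ (determinant $\theta_{i-1}$); the ``staircase'' block with rows $i,\cdots,j-1$ and columns $i+1,\cdots,j$, which is itself triangular with diagonal entries $M_{l,l+1}=-a_l$ and therefore has determinant $(-1)^{j-i}\prod_{l=i}^{j-1}a_l$; and the trailing submatrix on $\{j+1,\cdots,n-1\}$ (determinant $\phi_{j+1}$). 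Hence the minor equals $(-1)^{j-i}\big(\prod_{l=i}^{j-1}a_l\big)\theta_{i-1}\phi_{j+1}$, and the symmetric computation for $i\ge j$ — grouping the surviving rows and columns around the deleted row $j$ rather than the deleted column $i$, so that the staircase block now has diagonal $-b_{j+1},\cdots,-b_i$ — gives $(-1)^{i-j}\big(\prod_{l=j+1}^{i}b_l\big)\theta_{j-1}\phi_{i+1}$.

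Cramer's rule then gives $(\bm{I}-\bm{T})^{-1}_{i,j}=(-1)^{i+j}\,(\text{that minor})/\det(\bm{I}-\bm{T})$, and since $(-1)^{i+j}(-1)^{|i-j|}=1$ the signs cancel, leaving $(\bm{I}-\bm{T})^{-1}_{i,j}=\tfrac{1}{\Delta}\,\theta_{\min\{i,j\}-1}\,\phi_{\max\{i,j\}+1}$ times $\prod_{l=i}^{j-1}a_l$ when $i\le j$ and times $\prod_{l=j+1}^{i}b_l$ when $i\ge j$. It remains to identify this with the stated expression: in the first sum one pulls out the common factor $\prod_{l=\min\{i,j\}}^{j-1}a_l$, leaving exactly $\theta_{\min\{i,j\}-1}$, and in the second sum one pulls out $\prod_{l=j+1}^{\max\{i,j\}}b_l$, leaving exactly $\phi_{\max\{i,j\}+1}$; at most one of these two pulled-out products is nontrivial (the $a$-product precisely when $i<j$, the $b$-product precisely when $i>j$, both trivial when $i=j$), so the product of the two stated sums reproduces the formula above.

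I expect the main obstacle to be the block-triangular reduction of the deleted-row-and-column minor: choosing the right grouping of the surviving rows and columns, checking that all the blocks above the block diagonal really vanish and that the staircase block is triangular with the claimed diagonal, and keeping the sign $(-1)^{|i-j|}$ so that it cancels against the $(-1)^{i+j}$ of Cramer's rule; the remaining steps are routine reindexing of products. A sign-free alternative is to verify directly that $(\bm{I}-\bm{T})$ times the claimed matrix is the identity, using the three-term recurrences $\theta_r=(a_r+b_r)\theta_{r-1}-a_{r-1}b_r\theta_{r-2}$ and $\phi_s=(a_s+b_s)\phi_{s+1}-a_sb_{s+1}\phi_{s+2}$ for the off-diagonal entries, together with the splitting identity $\Delta=\theta_r\phi_{r+1}-a_rb_{r+1}\theta_{r-1}\phi_{r+2}$ for the diagonal entries, at the price of separate cases $i<j$, $i=j$, $i>j$ and the boundary rows $i=1$, $i=n-1$.
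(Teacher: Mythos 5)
Your proposal is correct, and it takes a genuinely different route from the paper. The paper verifies the claimed formula directly: it takes the $i$-th row $\bm{\alpha}^i$ of the candidate inverse and the $j$-th column $\bm{\beta}_j$ of $\bm{I}-\bm{T}$ and shows by explicit expansion that $\bm{\alpha}^i\cdot\bm{\beta}_j$ telescopes to $1$ when $i=j$ and to $0$ when $i\neq j$ --- essentially the ``sign-free alternative'' you mention at the end, carried out without naming the recurrences. You instead derive the formula from the adjugate: you observe that the leading and trailing principal submatrices of the tridiagonal matrix $\bm{I}-\bm{T}$ are themselves the $(\bm{I}-\bm{T})$-matrices of shorter birth--death chains, so the preceding determinant lemma evaluates $\theta_{i-1}$ and $\phi_{j+1}$ for free; the deleted-row-and-column minor then factors through a block-triangular decomposition whose middle ``staircase'' block contributes $(-1)^{|i-j|}$ times a product of $a$'s (for $i\le j$) or $b$'s (for $i\ge j$), and the sign cancels against the $(-1)^{i+j}$ of the cofactor. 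I checked the details: the three off-(block-)diagonal blocks do vanish by tridiagonality (the decomposition is block lower triangular for $i\le j$ and block upper triangular for $i\ge j$, which is harmless since only the diagonal blocks matter), and pulling $\prod_{l=\min\{i,j\}}^{j-1}a_l$ out of the first stated sum and $\prod_{l=j+1}^{\max\{i,j\}}b_l$ out of the second recovers exactly $\theta_{\min\{i,j\}-1}$ and $\phi_{\max\{i,j\}+1}$. The trade-off is that the paper's computation is elementary but must be handed the answer in advance and involves somewhat opaque telescoping of triple sums, whereas your argument constructs the answer, reuses the first lemma structurally, and isolates all the combinatorics in one clean minor evaluation, at the price of the sign bookkeeping you correctly flag.
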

\begin{proof}
Let $\bm{\alpha}^i$ denote the $i$-th row of $(\bm{I}-\bm{T})^{-1}$ and $\bm{\beta}_{j}$ denote the $j$-th column of $(\bm{I}-\bm{T})$. 
When $i=j$, we have 
\begin{align*}
    \bm{\alpha}^i\cdot \bm{\beta}_j & = -\frac{a_{i-1}}{\Delta} \left( \sum_{k=1}^{i-1} \prod_{l=k}^{i-2} a_l \prod_{l=1}^{k-1} b_l \right) \left( \sum_{k=i+1}^n \prod_{l=k}^{n-1} a_l \prod_{l=i}^{k-1} b_l \right) \\
    & + \frac{a_i + b_i}{\Delta} \left( \sum_{k=1}^{i} \prod_{l=k}^{i-1} a_l \prod_{l=1}^{k-1} b_l \right) \left( \sum_{k=i+1}^n \prod_{l=k}^{n-1} a_l \prod_{l=i+1}^{k-1} b_l \right) \\
    & - \frac{b_{i+1}}{\Delta} \left( \sum_{k=1}^{i} \prod_{l=k}^{i} a_l \prod_{l=1}^{k-1} b_l \right) \left( \sum_{k=i+2}^n \prod_{l=k}^{n-1} a_l \prod_{l=i+2}^{k-1} b_l \right) \\
    & = \frac{1}{\Delta} \left( \sum_{k=1}^{i-1} \prod_{l=k}^{n-1} a_l \prod_{l=1}^{k-1} b_l \right) + \frac{1}{\Delta} \left( \sum_{k=i+2}^n \prod_{l=k}^{n-1} a_l \prod_{l=1}^{k-1} b_l \right) \\
    & + \frac{a_i+b_i}{\Delta} \prod_{l=i+1}^{n-1}a_l \prod_{l=1}^{i-1} b_l = \frac{1}{\Delta}\left(\sum_{k=1}^{n} \prod_{l=k}^{n-1} a_l \prod_{l=1}^{k-1} b_l\right) \\
    & = 1.
\end{align*}
When $i\neq j$, without loss of generality, let $i<j$. We have 
\begin{align*}
    \bm{\alpha}^i\cdot \bm{\beta}_j & = -\frac{a_{j-1}}{\Delta} \left( \sum_{k=1}^{i} \prod_{l=k}^{j-2} a_l \prod_{l=1}^{k-1} b_l \right)
    \left( \sum_{k=j}^n \prod_{l=k}^{n-1} a_l \prod_{l=j}^{k-1} b_l \right) \\
    & + \frac{a_j + b_j}{\Delta} \left( \sum_{k=1}^{i} \prod_{l=k}^{j-1} a_l \prod_{l=1}^{k-1} b_l \right) \left( \sum_{k=j+1}^n \prod_{l=k}^{n-1} a_l \prod_{l=j+1}^{k-1} b_l \right) \\
    & - \frac{b_{j+1}}{\Delta} \left( \sum_{k=1}^{i} \prod_{l=k}^{j} a_l \prod_{l=1}^{k-1} b_l \right) \left( \sum_{k=j+2}^n \prod_{l=k}^{n-1} a_l \prod_{l=j+2}^{k-1} b_l \right) \\ 
    & = 0.
\end{align*}
Since $(\bm{I}-\bm{T})$ is invertible, Eq. \eqref{eq:lemma-mat-inv} characterizes its inverse. 
\end{proof}

Denote the absorption probability and expected time to absorption by $\bm{W}$ and $\bm{\tau}$ respectively. 
The theorem follows as $\bm{W} = (\bm{I}-\bm{T})^{-1} \bm{R}$ and $\bm{\tau}=(\bm{I}-\bm{T})^{-1}\bm{1}$ where $\bm{1}$ denotes the one-vector.

\section{Additional Simulation Results}

In this section, we first numerically demonstrate the effect of parameters $\beta, \beta'$. Next, we compare the numerical results of the best-response LTE diffusion model and linear threshold diffusion model and show their equivalence.

\subsection{Effect of Sensitivity Parameters $\beta, \beta'$}

\begin{figure}[ht]
    \captionsetup{font=small}
    \centering
    \begin{subfigure}[h]{0.48\textwidth}
    \centering
    \includegraphics[trim=0.9in 0.25in 0.9in 0.25in,clip,width=\textwidth]{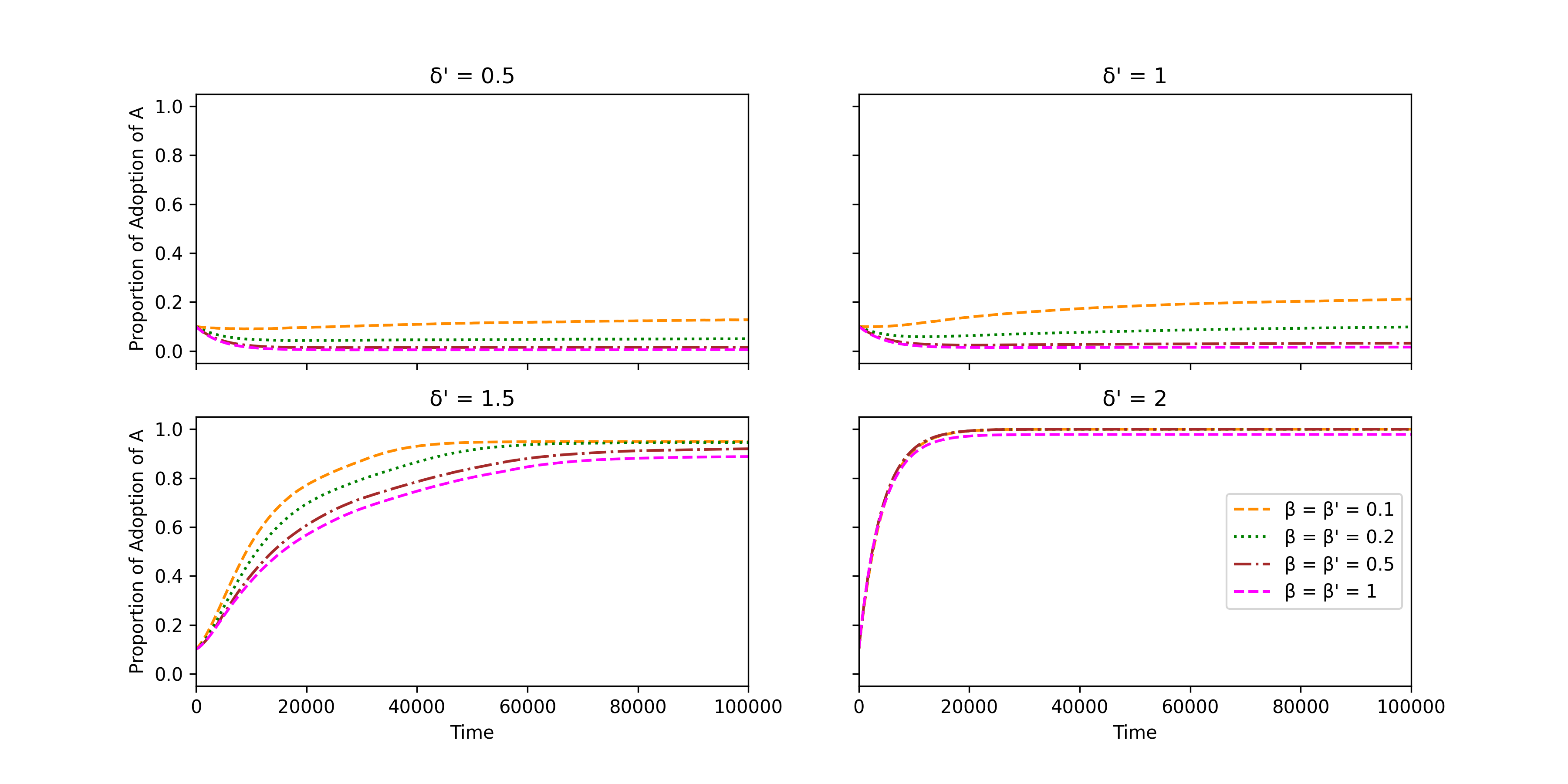}
    \subcaption{Initial adopter = 10\% of the total number of players.}
    \label{subfig:sim-egofb-beta-eff-10}
    \end{subfigure}
    \begin{subfigure}[h]{0.48\textwidth}
    \centering
    \includegraphics[trim=0.9in 0.25in 0.9in 0.25in,clip,width=\textwidth]{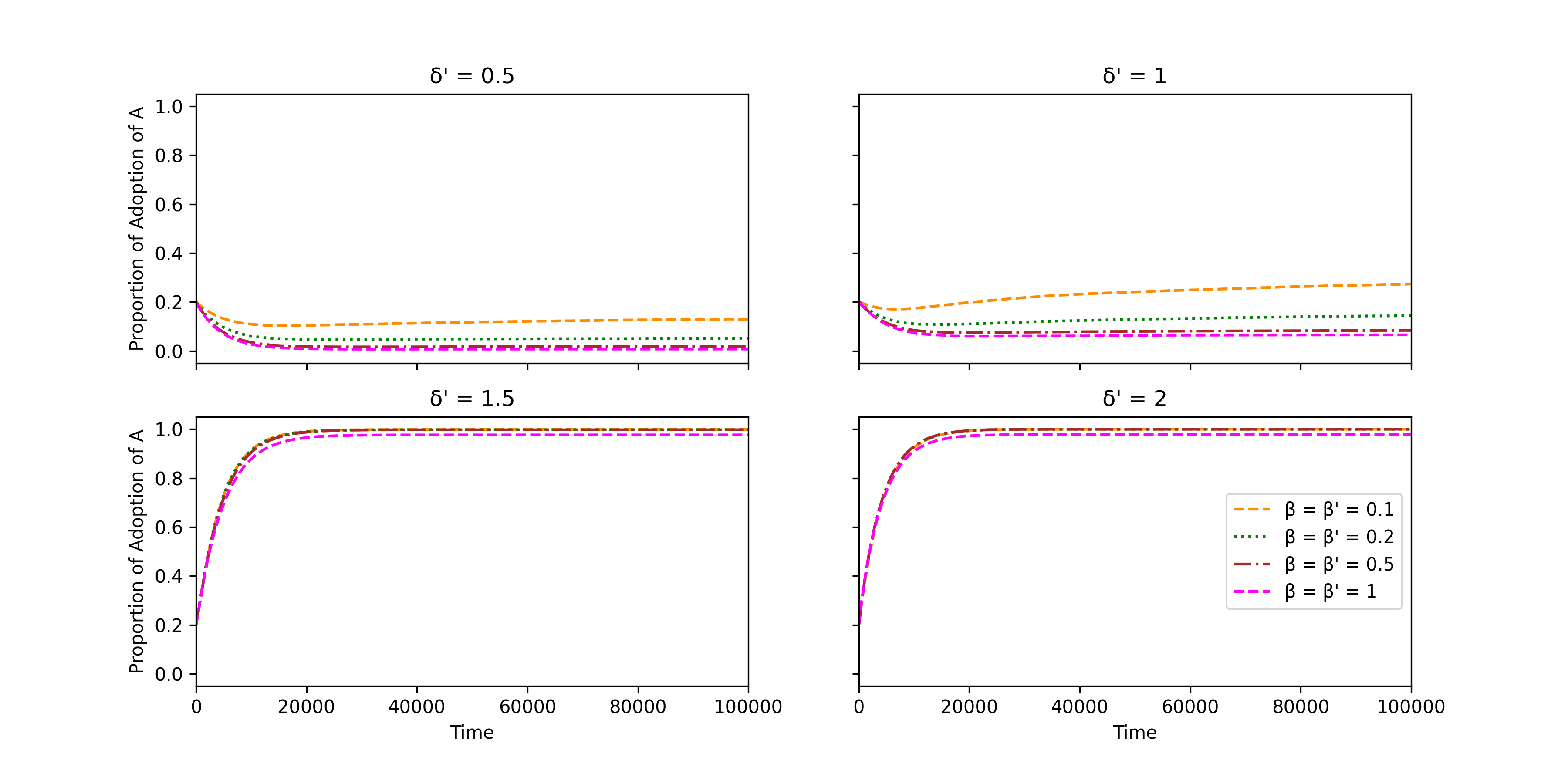}
    \subcaption{Initial adopter = 20\% of the total number of players.}
    \label{subfig:sim-egofb-beta-eff-20}
    \end{subfigure}
    \caption{Simulation results of diffusion on ego-Facebook social network with different values of $\beta$ and $\beta'$.}
    \label{fig:sim-egofb-beta-eff}
\end{figure}

In this subsection, we study the effect of sensitivity parameters $\beta$ and $\beta'$ numerically via simulation. We study the effect of the magnitude of $\beta$ and $\beta'$ on the diffusion process over the ego-Facebook social network. The same setting with Section \ref{subsec:diff-sim} is adopted except that values of $\beta$ and $\beta'$ are varied. We consider various values of $\beta$ and $\beta'$ from 0.1 to 1, and we set $\beta = \beta'$ in all cases. The payoff matrix in Table \ref{tab:payoff-sim-1} is adopted. Various values of $\delta'$ between 0.5 and 2 are considered, and the results for 10\% and 20\% initial adopters are presented in Fig. \ref{fig:sim-egofb-beta-eff}. In each plot are plotted the diffusion outcomes with different values of $\beta$ and $\beta'$ but the same $\delta'$ and number of initial adopters.

The results in Fig. \ref{fig:sim-egofb-beta-eff} show that different values of $\beta$ and $\beta'$ do not affect the general trend and direction of diffusion evolution. The diffusion process under different values of $\beta$ and $\beta'$ converge in the same direction. However, the final equilibrium state may differ slightly. In the cases shown in Fig. \ref{fig:sim-egofb-beta-eff}, the diffusion process converges to an equilibrium state with a larger proportion of A in the social network when the values of $\beta$ and $\beta'$ are smaller, regardless of the direction that it converges to. When the diffusion process starts from a state where A is sparse in the social network and finally converges to a state where A dominates the social network, the superior alternative A spreads over the network faster when $\beta$ and $\beta'$ are smaller.

\subsection{Comparison between BR-LTE and LT-LTE Models}

\begin{figure}[ht]
    \captionsetup{font=small}
    \centering
    \begin{subfigure}[h]{0.48\textwidth}
    \centering
    \includegraphics[trim=0.9in 0.25in 0.9in 0.25in,clip,width=\textwidth]{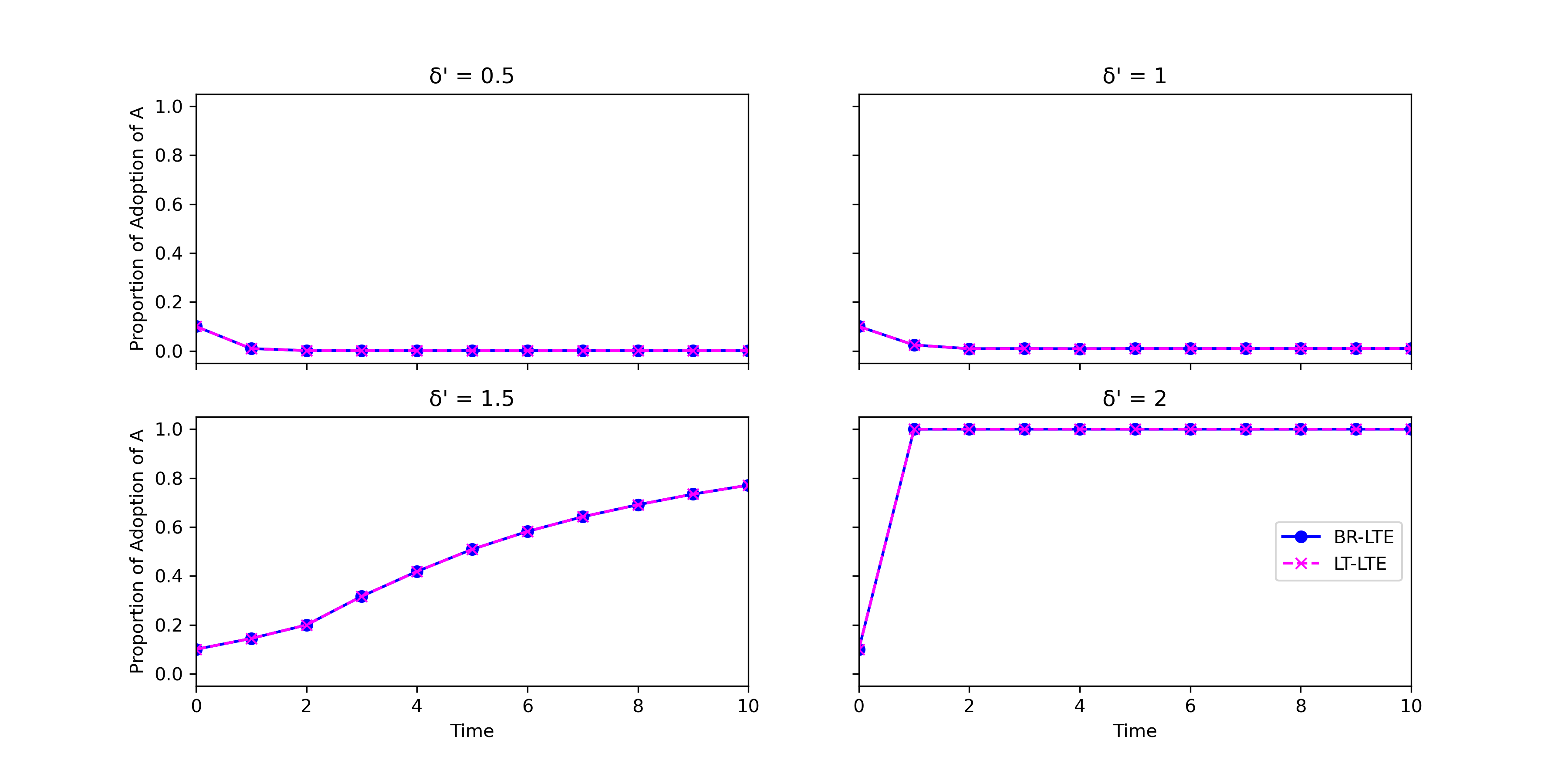}
    \subcaption{Initial adopter = 10\% of the total number of players.}
    \label{subfig:sim-egofb-br-lt-10}
    \end{subfigure}
    \begin{subfigure}[h]{0.48\textwidth}
    \centering
    \includegraphics[trim=0.9in 0.25in 0.9in 0.25in,clip,width=\textwidth]{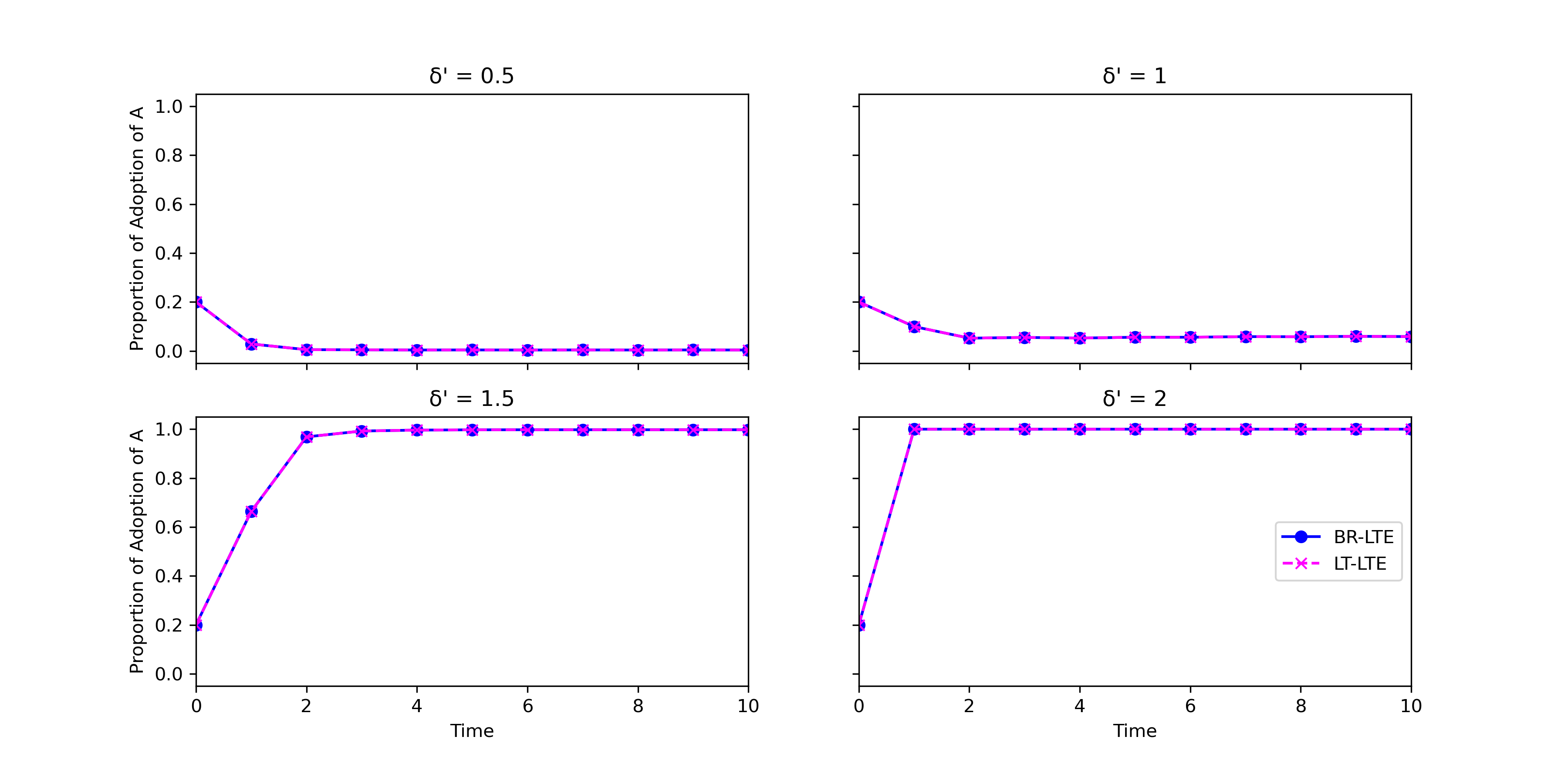}
    \subcaption{Initial adopter = 20\% of the total number of players.}
    \label{subfig:sim-egofb-br-lt-20}
    \end{subfigure}
    \caption{Simulation results of synchronous BR-LTE and LT-LTE models on ego-Facebook social network.}
    \label{fig:sim-egofb-br-lt}
\end{figure}

In Section \ref{sec:threshold-model}, we introduced the best-response LTE (BR-LTE) diffusion model in which the randomness of the LTE diffusion model is eliminated. This is equivalent to the LTE diffusion model by setting $\beta \to \infty$ and $\beta' \to \infty$. We also established the equivalence between BR-LTE model and the well-known linear threshold (LT) diffusion model when all players update their purchase option synchronously at each time instant. Such LT diffusion model in the context of LTE is called the linear threshold LTE model and abbreviated as LT-LTE model. In the last subsection of simulation, we present the simulation results of the synchronous BR-LTE and LT-LTE models over the ego-Facebook social network under the same settings as Section \ref{subsec:diff-sim}. The payoff matrix in Table \ref{tab:payoff-sim-1} is adopted, and various values of $\delta'$ from 0.5 to 2 are considered. The simulation results for 10\% and 20\% initial adopters are presented in Fig. \ref{fig:sim-egofb-br-lt}. In each plot, there are two curves: the blue one for the synchronous BR-LTE model and the magenta one for LT-LTE model. Recall that both models are synchronous, and therefore, the diffusion process converges much faster than the asynchronous model presented in the previous subsections. The curves for synchronous BR-LTE and LT-LTE models coincide and yield the same result. Therefore, the simulation results numerically demonstrate the equivalence of synchronous BR-LTE and LT-LTE models under their own diffusion dynamics and mechanisms.


\end{document}